\documentclass[fleqn,11pt]{wlscirep}
\usepackage[utf8]{inputenc}
\usepackage[T1]{fontenc}
\usepackage{hyperref}
\usepackage{authblk}
\usepackage{xcolor}
\usepackage{ragged2e}
\usepackage{algorithm}
\usepackage{algpseudocode}
\usepackage[normalem]{ulem}

\newcommand{\name}{JANUS}

\definecolor{R0}{rgb}{0,0,255}

\title{JANUS: A Multi-modal Foundation Neural Sampler for Disordered Materials}

\author[1,2,*,$\#$]{Denis Blessing}
\author[1,3,4,*,$\#$]{Mouyang Cheng}
\author[5]{Maximilian Schebek}
\author[6]{Jutta Rogal}
\author[3,7]{Mingda Li}
\author[1,$\dagger$]{Carles Domingo-Enrich}
\author[1,$\dagger$]{Yuanqi Du}

\affil[1]{Microsoft Research New England, Cambridge, MA 02142, USA}
\affil[2]{Karlsruhe Institute of Technology, 76131 Karlsruhe, Germany}
\affil[3]{Center for Computational Science and Engineering, MIT, Cambridge, MA 02139, USA}
\affil[4]{Department of Materials Science and Engineering, MIT, Cambridge, MA 02139, USA}
\affil[5]{Department of Physics, Freie Universit\"{a}t Berlin, 14195 Berlin, Germany}
\affil[6]{Initiative for Computational Catalysis, Flatiron Institute, New York, NY 10010, USA}
\affil[7]{Department of Nuclear Science and Engineering, MIT, Cambridge, MA 02139, USA}

\affil[*]{These authors contributed equally.}
\affil[$\#$]{Work done during internship at Microsoft Research.}
\affil[$\dagger$]{Correspondence. Email: carlesd@microsoft.com, yuanqidu@microsoft.com}

\begin{abstract}
Many problems in disordered materials require sampling beyond fixed composition and volume, where coupled changes in atomic identities and structure create a prohibitively expensive discrete-continuous sampling problem. 
Here we introduce JANUS, a multimodal neural sampler that couples continuous and masked discrete diffusion through an equivariant graph neural network trained directly from energy evaluations, without pre-generated equilibrium data. 
In benchmark Ising and isobaric $\Delta\mu NPT$ alloy systems, JANUS reproduces reference Monte Carlo equilibrium observables and recovers free energies and phase behavior with more than three orders of magnitude fewer energy evaluations.
In multicomponent alloys, JANUS enables conditional steering toward prescribed chemical short-range order and enhanced bulk modulus and, when coupled to a large language model evolutionary agent, performs efficient inverse design for balanced optical and mechanical properties. 
In semiconductors like silicon and diamond, JANUS explores vacancies and dopants spanning 15 elements in grand-canonical $\mu VT$ ensembles, recovers established defects including the silicon $E$ centre, and identifies new candidate defect pairs and triplets for quantum engineering, including S-Ti in silicon and B-O-O in diamond, with deep in-gap states validated by hybrid-functional density functional theory.
By unifying discrete site identities with continuous structural and volumetric relaxation, JANUS provides a foundation for thermodynamic sampling, characterization and inverse design of chemically disordered materials.
\end{abstract}

\begin{document}
\addtocontents{toc}{\protect\setcounter{tocdepth}{-10}}

\flushbottom
\maketitle
\thispagestyle{empty}

\section*{Introduction}
The thermodynamic state of a crystalline disordered material is defined not only by its reference lattice, but by an ensemble of chemical configurations, atomic displacements, defects and cell volumes \cite{simonov2020designing,peng2026atomistic,esters2021settling,esters2023qh}. Chemical disorder arises when multiple elements or vacancies occupy common crystallographic sites without forming a perfectly ordered arrangement, as in substitutional alloys, cation-disordered ceramics and defect-rich semiconductors \cite{simonov2020designing,peng2026atomistic}. 
These configurations may exhibit clustering and chemical short-range order (SRO) through the competition between energetic interactions and configurational entropy \cite{chen2021direct,wu2023chemical}. Concurrently, thermal atomic motion and cell fluctuations modify local coordination, strain and bonding, contributing vibrational and pressure-volume effects and coupling chemical order to structural relaxation \cite{song2017local,esters2021settling,esters2023qh}. 
Such couplings are central to the thermodynamics of diverse disordered systems, ranging from high-entropy alloys and ceramics to disordered solid electrolytes \cite{george2019high,oses2020high,zeng2022high}, where variations in local chemical and structural environments can reshape phase stability \cite{esters2021settling,kam2023crystal}, ionic transport \cite{ji2019hidden}, catalytic activity \cite{mints2026unraveling}, mechanical response \cite{chen2021simultaneously} and defect energetics \cite{baldassarri2023oxygen}. Predictive modeling therefore requires joint sampling of chemical configurations, lattice vibrations and cell relaxation rather than treating them as separate problems.

However, conventional sampling remains computationally demanding even when only occupational disorder is considered. Markov-chain Monte Carlo generates a sequential trajectory of lattice configurations, whose energies are evaluated at each step. Its slow mixing near phase transitions, limited parallelism and increasing cost for large supercells make systematic exploration across temperatures and compositions particularly expensive \cite{widom2018modeling,peng2026atomistic}. 
Moreover, these simulations are not amortized: each new temperature, composition or thermodynamic condition generally requires an independent sampling run.
Incorporating structural relaxation and vibrational motion further requires coupling chemical transmutation moves with molecular dynamics (MD), substantially increasing the number of energy and force evaluations \cite{sadigh2012scalable,widom2014hybrid}.
Recently, a broad class of neural samplers based on deep generative models has emerged to directly approximate target Boltzmann distributions, enabling efficient generation of effectively independent equilibrium configurations and amortized sampling across thermodynamic conditions.
For example, autoregressive models, including the most recent transformer-based architectures, enable direct sampling and tractable likelihood evaluation \cite{wu2019solving,damewood2022sampling,du2026scaling}, whereas discrete diffusion models provide flexible generative processes over large categorical state spaces \cite{holderriethleaps2025,zhu2025mdns,guo2026discrete}. 
However, existing approaches predominantly operate on discrete variables defined on fixed lattice sites and cell geometries, leaving off-lattice atomic fluctuations and composition-dependent volume changes unaccounted for.
A general neural sampler capable of jointly treating discrete chemical identities and occupancies, continuous atomic coordinates, and cell degrees of freedom across flexible thermodynamic ensembles is still lacking.

In this work, we introduce JANUS, a multimodal neural sampler that jointly learns chemical occupations, atomic displacements and cell-volume fluctuations in crystalline disordered materials. 
Reflecting its two complementary facets, JANUS builds upon the previous continuous neural sampler for $NVT$ ensemble \cite{cheng2026atlas}, coupling masked discrete diffusion over atomic identities and vacancies with continuous diffusion over atomic coordinates and cell volume.
Starting from prescribed lattice sites with masked identities, the discrete process unmasks chemical configurations in random order, while the continuous process transports simple Gaussian priors toward equilibrium distributions of structural and volumetric fluctuations. An $E(3)$-equivariant graph neural network, trained on configurations bootstrapped by JANUS itself, learns the coupled forward and reverse stochastic dynamics directly from the target potential energy. 
This construction enables direct sampling in flexible thermodynamic ensembles, including isobaric semi-grand-canonical (SGC) $\Delta\mu NPT$ and grand-canonical (GC) $\mu VT$ ensembles. The path weights from forward and backward stochastic processes further yield free energy, ensemble-property estimates, and conditional steering towards a target observable, while a single model can be amortized across temperatures and chemical potentials and extrapolated to larger system sizes at inference.

We first benchmark JANUS on the Ising model and fcc Cu-Ni alloys. For the Ising model, JANUS reproduces the temperature- and field-dependent equilibrium spin populations and the ferromagnetic-paramagnetic phase transition. 
For Cu-Ni in the isobaric SGC ensemble, it captures equilibrium compositions, structural and volumetric relaxation, and composition-dependent free energies across temperature, while extrapolating beyond the system sizes used during training.
We then demonstrate JANUS for thermodynamic sampling in chemically disordered alloys with competing phases. It recovers the Cu-Ag miscibility gap and the competition between fcc and bcc phases in Ni-Cr, reproducing composition- and temperature-dependent free energy landscapes and phase boundaries with more than three orders of magnitude fewer target-energy evaluations than reference Monte Carlo sampling.
In the CrCoNi alloy, JANUS captures chemical SRO observed in experiment and supports inference-time steering toward prescribed Warren-Cowley parameters or enhanced bulk modulus. Coupled to a large language model (LLM) evolutionary agent, JANUS further identifies Pareto-optimal multicomponent alloys across the Ag-Al-Au-Cu-Ni chemical space that balance visible-light reflectance $R_\text{visible}$, bulk and shear moduli ($B$, $G$), and the Pugh ratio $B/G$.

Finally, we apply JANUS to defect motif discovery in silicon and diamond using amortized grand-canonical $\mu VT$ models spanning vacancies and dopants from 15 elements. 
JANUS recovers established defect motifs such as the silicon $E$ centre while identifying previously unexplored pairs and higher-order triplets. 
Subsequent hybrid-functional density functional theory (DFT) calculations reveal candidate defects for quantum engineering with in-gap charge-transition levels (CTLs) and spin-polarized electronic states, including S-Ti and Ga-Ga-P in silicon and Fe-VAC and B-O-O in diamond.
These results establish JANUS as a foundation neural sampler for equilibrium thermodynamics, free energy estimation, conditional steering, inverse design and defect discovery in disordered materials with coupled discrete and continuous degrees of freedom.

\section*{Results}

\begin{figure}[!h]
  \centering
  \includegraphics[width=\textwidth]{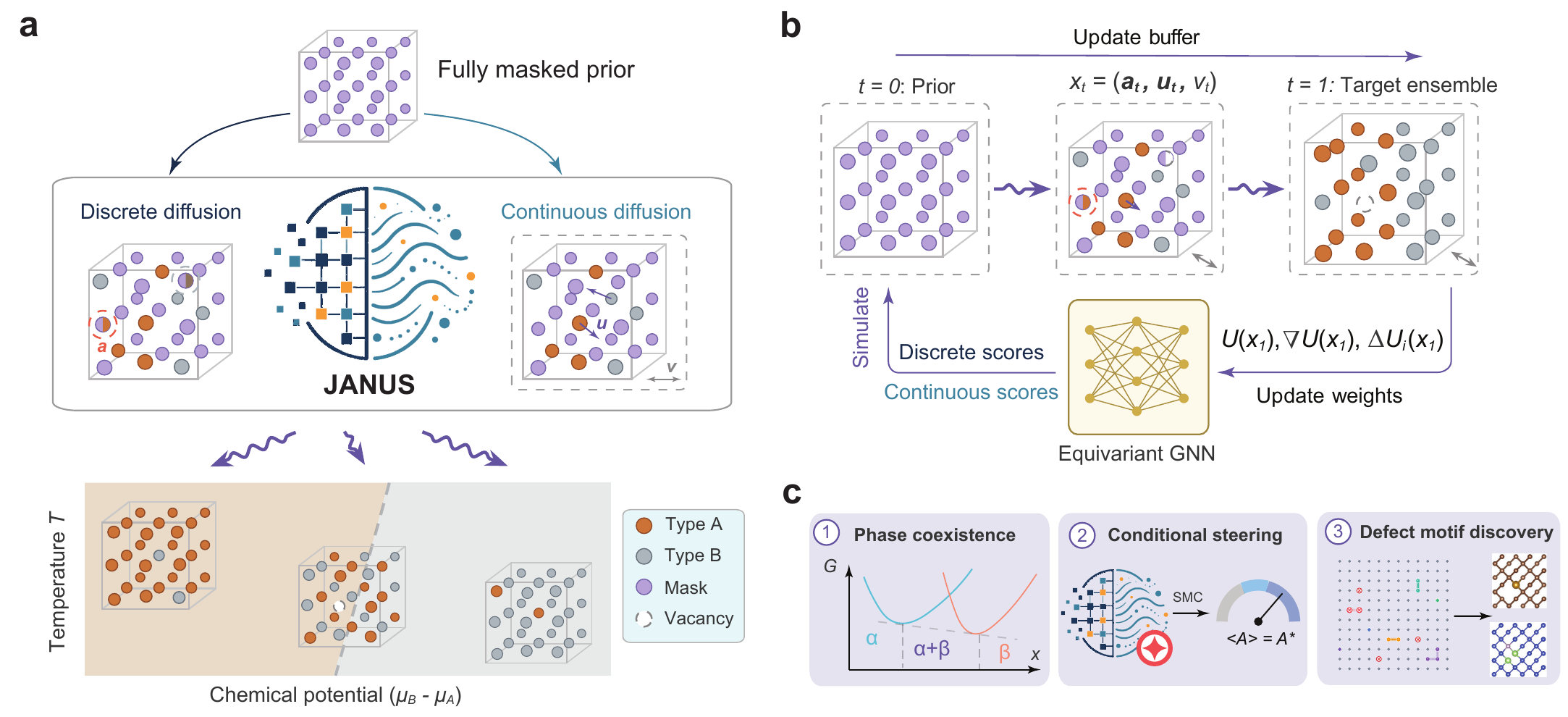}
  \caption{\textbf{Overview of the JANUS framework.} \textbf{a.} JANUS couples masked discrete diffusion over atomic identities and vacancies with continuous diffusion over atomic displacements and cell volume, enabling joint sampling of diverse atomic configurations across thermodynamic conditions. Starting from a fully masked prior, the two channels evolve jointly toward the target ensemble. 
  \textbf{b.} Self-bootstrapped training framework of JANUS. Intermediate states $x_t=(\mathbf{a}_t, \mathbf{u}_t, v_t)$ are generated between the prior ($t=0$) and target ensemble ($t=1$) from stochastic processes, and an $E(3)$-equivariant graph neural network learns the discrete and continuous scores using the potential energy $U(x_1)$, its gradient $\nabla U(x_1)$ and discrete changes $\Delta U_i(x_1)$ of the generated structures at $t=1$. The training buffer is iteratively updated with newly generated configurations. 
  \textbf{c.} Downstream applications enabled by the trained JANUS sampler, including determination of phase coexistence from thermodynamic free energies, inference-time conditional steering towards target properties and systematic discovery of novel defect motifs.}
  \label{fig1}
\end{figure}

\subsection*{Overview of JANUS}
Similar to the goal of previous neural samplers for $NVT$ ensemble in continuous space \cite{cheng2026atlas}, our objective is to sample directly from a thermodynamic distribution specified by an energy function $U$ and forces $\mathbf F=-\nabla U$, without requiring pre-generated equilibrium configurations. 
We focus on lattice-based disordered materials, including substitutional alloys and point defects in crystalline hosts. In these systems, a configuration can be represented as $(\mathbf{a},\mathbf{x})=(\mathbf{a},\mathbf{u},v)$, where $\mathbf{a}=(a_1,\ldots,a_N)$ contains the chemical identity or occupancy of each lattice site, $\mathbf{u}$ contains atomic displacements from prescribed lattice sites as fractional coordinates, and $v=\log V$ describes the cell volume $V$. 
The target ensemble is restricted to the crystalline basin associated with the reference lattice, where atoms may relax around their sites, and long-range chemical rearrangements are represented through changes in $\mathbf{a}$.
As the most general case, we consider a $\boldsymbol{\mu}PT$ ensemble with $M$ reference sites, each carrying one of $K$ possible tokens from an alphabet $\mathcal A=\{1,\ldots,K\}$, which may include multiple atomic species and a vacancy token $\texttt{VAC}$. 
Defining $N_s(\mathbf a)=\sum_i\mathbf 1[a_i=s]$ as the count of atoms in each species, the target distribution can be written as
\begin{equation}
\label{eq:target}
p_{\text{target}}\propto\exp\left[-\frac{1}{k_B T}\left(
U(\mathbf a,\mathbf x)+PV-\sum_{s\in\mathcal A}\mu_sN_s(\mathbf a)\right)\right].
\end{equation}
Because the total number of lattice sites is fixed, only chemical-potential differences $\Delta\mu$ are physically relevant, and one species may be chosen as the reference. 
A more rigorous formulation additionally includes vacancy-dependent volume factors (see Methods).

This formulation naturally covers a broad class of lattice-based disordered materials. For alloys, we use the isobaric SGC ($\Delta{\mu}NPT$) ensemble, in which the total particle number is fixed while the elemental populations and volume fluctuate. For defect calculations, we use the GC ($\mu VT$) ensemble at fixed volume, treating the vacancy as an additional species with a tunable chemical potential $\mu_{\mathrm{VAC}}$. In this case, vacancy-dependent volume factors can be absorbed into an effective vacancy chemical potential (see Methods). 
The same framework therefore applies to multielement alloys and crystalline defect systems by changing only the active token set, chemical-potential differences, and continuous channels.
JANUS samples the target \eqref{eq:target} with a generative process that treats all degrees of freedom jointly, coupling continuous diffusion over the structural channels with masked discrete diffusion over the species channel (Fig.\,\ref{fig1}a). Depending on the physical problem, up to three generative channels are active: a rigid-lattice model uses only the species channel $\mathbf a$; a fixed-volume GC defect model further adds the displacements $\mathbf u$; and a flexible-cell alloy in the isobaric SGC ensemble uses all three channels $(\mathbf a,\mathbf u,v)$. Over a generation time $t\in[0,1]$, the continuous channels $\mathbf x=(\mathbf u,v)$ evolve by a stochastic differential equation built on stochastic interpolants \cite{albergo2025stochastic},
\begin{equation} \label{eq: fwd sde}
\mathrm{d}x^c_t=\big[b^c_\theta+g^2(t)\,s^c_\theta\big](\mathbf a_t,\mathbf x_t,t)\,\mathrm{d}t+\sqrt{2}\,g(t)\,\mathrm{d}w^c_t,\qquad x^c_0\sim\mu_c,\qquad c\in\{\mathbf{u},v\},
\end{equation}
where $b^c_\theta$ and $s^c_\theta$ are learned velocity and score fields evaluated at the full state, $g(t)$ is a prescribed noise schedule, $w^c_t$ are Wiener processes, and $\mu_c$ is a simple Gaussian prior whose physics-informed parameterization is described in Methods. These dynamics transport the priors onto the equilibrium distribution of structural and volumetric fluctuations. The species channel starts fully masked, $\mathbf a_0=(\texttt{M},\ldots,\texttt{M})$, and follows absorbing-state discrete diffusion analogous to masked diffusion language models \cite{austin2021structured,sahoo2024simple}, governed by a reveal schedule $\alpha_a(t)$ that increases from $\alpha_a(0)=0$ to $\alpha_a(1)=1$: the generative process unmasks each site exactly once, at a random time set by $\alpha_a$, drawing its identity from a learned denoising posterior conditioned on the full current state,
\begin{equation}\label{eq: denoising posterior}
q_{\theta,i}(b\mid \mathbf a_t,\mathbf x_t)\;\approx\;\mathbb P\big(a_{1,i}=b\,\big|\,\mathbf a_t,\mathbf x_t\big),
\end{equation}
and revealed sites never change thereafter. Unlike autoregressive models that admit a fixed order, the generative process can generate in arbitrary order with likelihood still accessible via forward-backward path weight (see Methods).

All learned quantities are heads of a single $E(3)$-equivariant graph neural network evaluated once per generation step on the partially revealed species, the current displacements and the cell volume. Supplying the thermodynamic conditions $(T,\Delta\mu)$ as additional inputs to the drifts in \eqref{eq: fwd sde} and the denoising posterior in \eqref{eq: denoising posterior} amortizes a single model over a continuous window of temperatures and chemical potentials.
Because equilibrium configurations are unavailable, JANUS is trained by self-consistent fixed-point bootstrapping \cite{blessing2026bridge, havens2026flow} (Fig.\,\ref{fig1}b). In each round, the current sampler generates terminal configurations $(\mathbf a_1,\mathbf x_1)$ under randomly selected thermodynamic conditions $(T,\Delta\mu)$ and the interatomic potential labels each configuration with its energy $U(\mathbf a_1,\mathbf x_1)$, the gradient $\nabla U$ with respect to the continuous channels, and the energy changes $\Delta U_i$ under single-site substitutions. The model parameters are then updated using these labels at interpolated intermediate states. The continuous heads are updated by least-squares regression akin to flow \cite{lipman2022flow} and score matching \cite{de2024target}, whereas the species head is updated by cross-entropy against soft labels derived from the substitution energies (see Methods). 
Repeating the iterative cycle shown in Fig.\,\ref{fig1}b drives JANUS toward a self-consistent generator of the target thermodynamic ensemble, with the underlying physics supplied solely through pointwise evaluations of the energy function rather than pre-generated equilibrium data.

The trained sampler supports diverse downstream applications, with three representative categories illustrated in Fig.\,\ref{fig1}c. First, following the path weight from forward and backward stochastic processes, every generated sample can be importance-weighted, turning generation into estimates of free energies and thermodynamic observables, from which phase coexistence and phase diagrams follow. Second, through reward tilting we can steer the ensemble toward prescribed structural or functional properties, either by fine-tuning the model or at inference time by sequential Monte Carlo \cite{he2026rne,rector2026general}; coupled with an evolutionary LLM agent, this further enables multi-objective inverse design over composition space. 
Third, amortized sampling across chemical space enables systematic exploration of structural motifs, including semiconductor defect complexes as candidates for quantum engineering. These capabilities are demonstrated in the following sections, with algorithmic details, training procedures and parameter settings provided in Methods.

\begin{figure}[!htbp]
  \centering
  \includegraphics[width=0.95\textwidth]{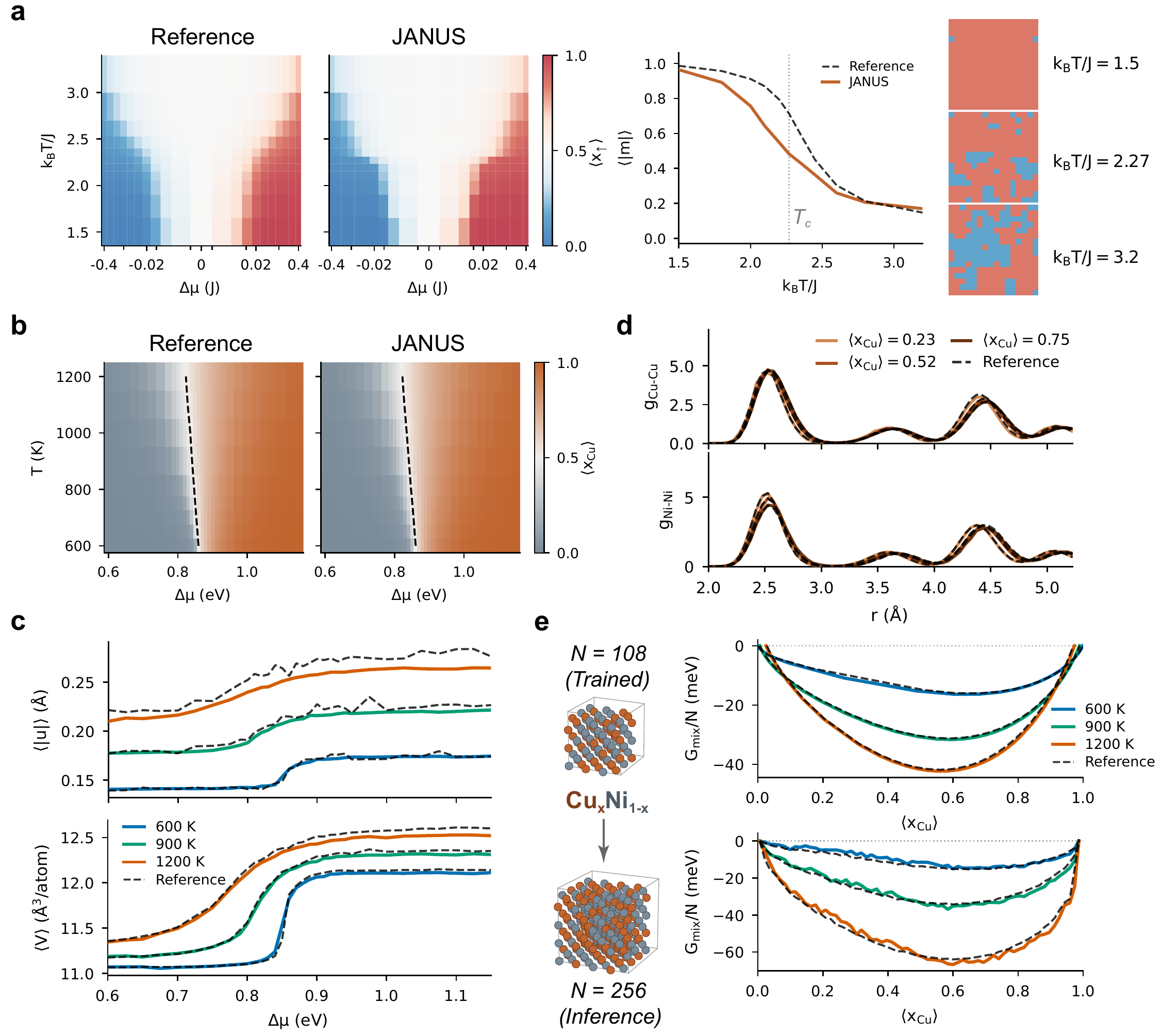}
  \caption{\textbf{Validation of JANUS across discrete and multimodal sampling.} 
  \textbf{a.} Comparison between reference and JANUS sampling for the two-dimensional ferromagnetic Ising model on a $16\times16$ lattice. Left: mean fraction of spin-up sites $\langle x_\uparrow\rangle$ is shown across temperature $T$ and external field $\Delta\mu=\mu_\uparrow-\mu_\downarrow$. Right: absolute magnetization $\langle |m| \rangle$ at $\Delta\mu=0$ as a function of temperature, with representative spin configurations shown at temperatures below, near and above $T_c\simeq2.27 J/k_B$.
  \textbf{b.} Validation on fcc Cu-Ni alloys with all three JANUS channels active. Average Cu concentration $\langle x_{\mathrm{Cu}}\rangle$ as a function of temperature and chemical-potential difference $\Delta\mu=\mu_\text{Cu}-\mu_\text{Ni}$, comparing reference and JANUS sampling. The dashed lines denote $\langle x_{\mathrm{Cu}}\rangle=0.5$. \textbf{c.} Mean atomic displacement $\langle|\mathbf{u}|\rangle$ and atomic volume $\langle V\rangle$ obtained by JANUS at $600$, $900$ and $1200$ K, with reference results shown by dashed lines.
  \textbf{d.} Cu-Cu and Ni-Ni pair distribution functions $g(r)$ at $T=800$ K with representative alloy compositions, compared with reference ensembles.
  \textbf{e.} System-size extrapolation from $N=108$ systems used during training to $N=256$ systems at inference, demonstrated by the composition-dependent mixing free energy $G_{\mathrm{mix}}/N$ at $600$, $900$ and $1200$ K. JANUS predictions are compared with reference values shown as dashed lines.}
  \label{fig2}
\end{figure}

\subsection*{Validation across discrete and multimodal sampling}
We first validate the newly introduced discrete channel of JANUS in isolation using the prototypical two-dimensional ferromagnetic Ising model on a $16\times16$ lattice (Fig.\,\ref{fig2}a). In this setting, each lattice site carries only a discrete spin variable $\{+1, -1\}$, while $\Delta\mu$ plays the role of the external magnetic field and no continuous degrees of freedom are present.
This provides a clean benchmark of the discrete diffusion process before coupling it to atomic and cell degrees of freedom. Rather than training an independent sampler at each thermodynamic state, we condition a single JANUS model jointly on temperature $T$ and chemical-potential difference $\Delta\mu$, and train it over the full two-dimensional $(T,\Delta\mu)$ domain. Once this one-time training cost is amortized, the same model can directly generate equilibrium configurations at arbitrary thermodynamic conditions within the training range without state-specific retraining. 
Across this domain, JANUS closely reproduces the reference magnetization landscape. In particular, at $\Delta\mu=0$, the model captures the paramagnetic-to-ferromagnetic transition near the critical temperature $k_{\mathrm B}T_c/J\simeq2.27$, including the rapid emergence of finite magnetization below $T_c$. Representative spin configurations below, near and above $T_c$ further illustrate the evolution from long-range ferromagnetic order to disordered spins.

We next turn to the more realistic fcc Cu-Ni alloy, where all three JANUS channels, chemical identities $\mathbf{a}$, atomic displacements $\mathbf{u}$ and cell volume $v$, are simultaneously active. As in the Ising benchmark, a single JANUS model is trained on a $3\times3\times3$ fcc supercell with 108 atoms, and amortized over the full $(T,\Delta\mu)$ space. All results below are subsequently generated from this amortized model.
JANUS reproduces the reference composition response, $\langle x_{\mathrm{Cu}}\rangle$, over a broad range of temperatures from 600 to 1200 K and chemical-potential differences (Fig.\,\ref{fig2}b), including the temperature-dependent shift and broadening of the composition crossover. 
Beyond chemical occupations, the generated ensembles also recover the corresponding structural and volumetric relaxation. As shown in Fig.\,\ref{fig2}c, the mean atomic displacement $\langle|\mathbf{u}|\rangle$ and equilibrium atomic volume $\langle V\rangle$ remain in close agreement with reference calculations across chemical potential and temperature.
In parallel, we further examine the local atomic structure of Cu-Ni alloys across different Cu concentrations $x_{\mathrm{Cu}}$ using the pair distribution functions $g_{\mathrm{Cu-Cu}}(r)$ and $g_{\mathrm{Ni-Ni}}(r)$. As shown in Fig.\,\ref{fig2}d, the JANUS predictions closely follow the reference ensembles across representative compositions, indicating that the sampler accurately preserves composition-dependent local coordination and structural correlations.
Finally, we test whether our model trained at one system size can be transferred directly to larger cells, using the mixed Gibbs free energy as a stringent thermodynamic benchmark, as is shown in Fig.\,\ref{fig2}e. For the $N=108$ system used during training, JANUS accurately reproduces the reference composition-dependent $G_\text{mix}$ across temperatures. More importantly, without retraining, the same model can be directly evaluated at $N=256$, where chemical disorder is better represented, and continues to reproduce $G_{\mathrm{mix}}$ across composition and temperature, demonstrating reliable system-size extrapolation.

These benchmarks on alloys demonstrate that JANUS can accurately couple chemical, structural and volumetric fluctuations, amortize multimodal sampling across thermodynamic conditions, and transfer across system size. 
More importantly, this accuracy is obtained with about 1700-fold fewer energy evaluations than traditional MCMC approaches for comparable equilibrium sampling accuracy, as quantified and further discussed in Fig.\,\ref{fig3}c.

\begin{figure}[!h]
  \centering
  \includegraphics[width=\textwidth]{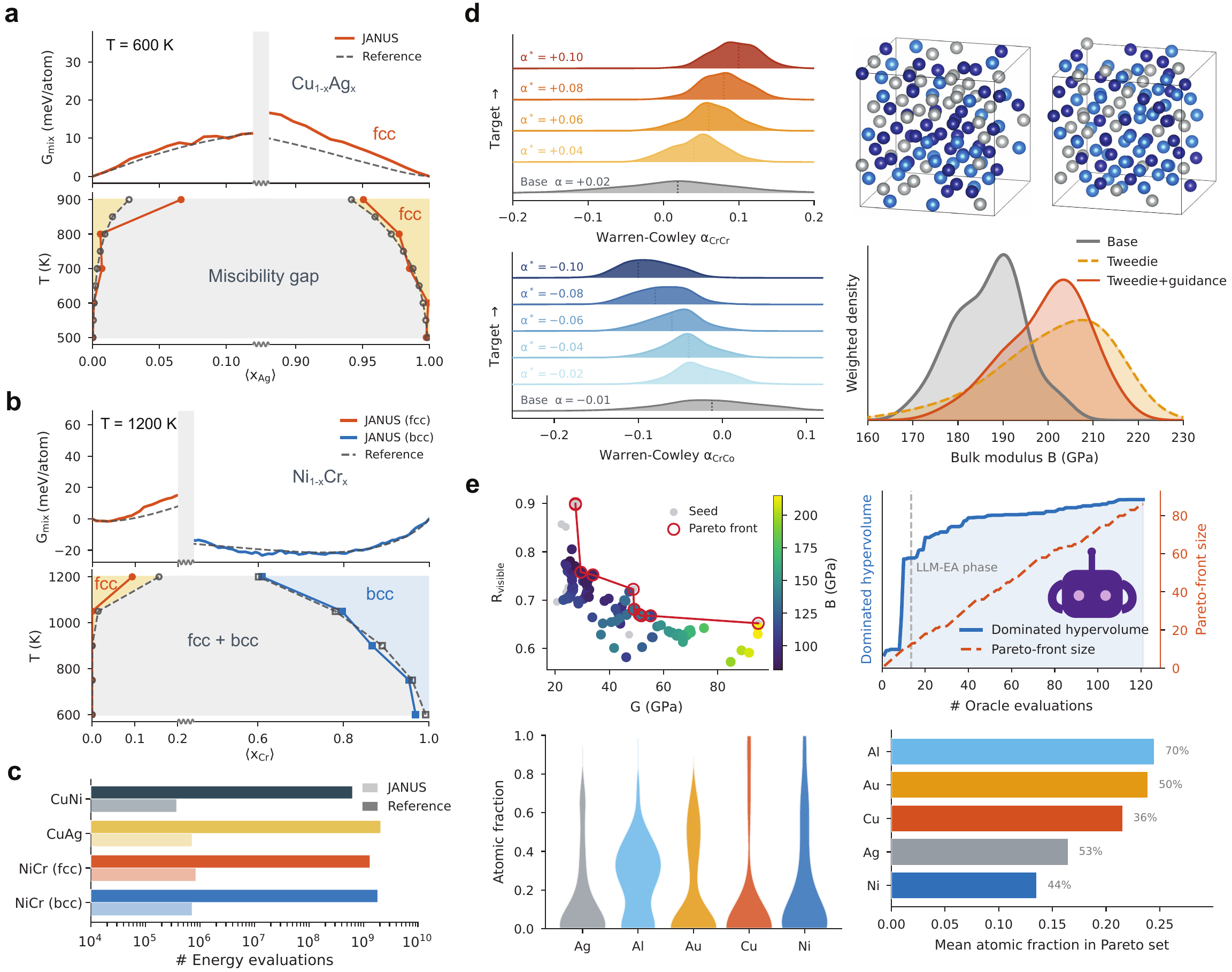}
  \caption{\textbf{Thermodynamic sampling, conditional generation and inverse design of alloys.} \textbf{a.} Thermodynamic sampling of fcc Cu-Ag alloys. Composition-dependent mixed Gibbs free energy $G_\text{mix}(x)$ at $T=600$~K and the resulting fcc miscibility-gap boundaries are shown from JANUS and reference calculations. \textbf{b.} Thermodynamic sampling of Ni-Cr alloys in fcc and bcc phases. $G_\text{mix}(x)$ at $T=1200$~K and the resulting phase boundaries between Ni-rich fcc and Cr-rich bcc phases are shown for JANUS and reference calculations. 
  For each lattice family, a single JANUS model is amortized across the thermodynamic conditions considered.
  \textbf{c.} Comparison of the number of target-energy evaluations between JANUS and reference sampling for Cu-Ni (Fig.\,\ref{fig2}), Cu-Ag and Ni-Cr alloys. 
  \textbf{d.} Conditional steering in fcc CrCoNi alloys. Targeted distributions of the Warren-Cowley parameters $\alpha_{\mathrm{CrCr}}$ and $\alpha_{\mathrm{CrCo}}$ are obtained by reward tilting of the base JANUS sampler. Representative sampled supercells are shown ($\alpha_{\mathrm{CrCr}}=0.10$ and $\alpha_{\mathrm{CrCo}}=-0.10$, respectively). Reward tilting on the bulk modulus $B$ is also shown, using Tweedie-style reward estimates with or without gradient guidance. 
  \textbf{e.} Multi-objective inverse design over Ag-Al-Au-Cu-Ni alloys containing up to three elements using an LLM-driven evolutionary framework coupled to JANUS. The search jointly optimizes the bulk modulus $B$, shear modulus $G$, Pugh ratio $B/G$ and visible-range optical reflectance $R_{\mathrm{visible}}$. Shown are the Pareto front evaluated in the ($R_{\mathrm{visible}}$–$G$) subspace, the evolution of dominated hypervolume and Pareto-front size, and the elemental fractions and occurrence statistics within the final Pareto set.}
  \label{fig3}
\end{figure}

\subsection*{Thermodynamic sampling and inverse design of alloys}
Having established the accuracy of JANUS for coupled discrete, structural and volumetric sampling, we next demonstrate its versatile application across a broader range of alloy problems, from thermodynamic phase equilibria to conditional generation and property-driven inverse design, as is shown in Fig.\,\ref{fig3}. 
We first consider phase coexistence and boundaries of Cu-Ag alloys, which exhibit a pronounced fcc miscibility gap, and Ni-Cr alloys, where the Ni-rich fcc and Cr-rich bcc phases compete over a broad composition and temperature range, as is shown in Fig.\,\ref{fig3}a,b. From structures generated by the trained JANUS models, we reconstruct the composition-dependent Gibbs free energy $G(x)$ and determine the corresponding coexistence boundaries. 
For Cu-Ag, JANUS reproduces the free energy landscape associated with separation into Cu-rich and Ag-rich phases, and the resulting miscibility gap across temperature (Fig.\,\ref{fig3}a). For Ni-Cr, separate sampling of the fcc and bcc branches recovers their relative free energies and competition across composition, yielding phase boundaries in close agreement with the reference calculations (Fig.\,\ref{fig3}b).
Importantly, these evaluations do not require a separate model for every composition or thermodynamic condition. For each lattice family, a single JANUS model is amortized over the complete range of sampled conditions on the same lattice basis, and only the structurally distinct fcc and bcc phases of Ni-Cr require separate lattice-specific models. 
This amortization substantially reduces the cost of mapping extended thermodynamic spaces, which is quantified in Fig.\,\ref{fig3}c. Across Cu-Ag, Ni-Cr and the Cu-Ni benchmark introduced in Fig.\,\ref{fig2}, JANUS reaches comparable thermodynamic estimates using more than three orders-of-magnitude fewer energy evaluations than the corresponding reference samplers.

Beyond equilibrium sampling, JANUS can also be steered at inference time towards ensembles with prescribed material characteristics, as demonstrated for medium-entropy fcc CrCoNi at $T=800$ K in Fig.\,\ref{fig3}d. 
We first focus on chemical short-range order (SRO), which plays an important role in governing the local chemical environment and resulting functional properties of alloys. SRO is characterized here by the Warren-Cowley parameter $\alpha_{ij}$, which quantifies the tendency of atomic species $i$ and $j$ to associate or avoid one another relative to a random alloy. 
Without additional steering, the ensemble sampled from pretrained JANUS estimates positive $\alpha_\text{CrCr}=0.02$ and negative $\alpha_\text{CrCo}=-0.01$, already reproducing the characteristic Cr–Cr avoidance observed experimentally and the Cr–Cr avoidance and Cr–Co affinity predicted by first-principles calculations \cite{zhang2020short,sheriff2024quantifying}.
To generate ensembles with controlled SRO, we tilt the learned distribution $p_\text{base}(\mathbf{a},\mathbf{x})$ using a reward $r(\mathbf{a},\mathbf{x})$,
\begin{equation}
    p_{\eta}(\mathbf{a},\mathbf{x})\propto p_{\mathrm{base}}(\mathbf{a},\mathbf{x})\exp[\eta r(\mathbf{a},\mathbf{x})],
\end{equation}
where $\eta>0$ controls the tilting strength, and $r(\mathbf{a},\mathbf{x})=-[\alpha_{ij}(\mathbf{a},\mathbf{x})-\alpha_{ij}^{*}]^2$ penalizes deviations of a Warren-Cowley SRO parameter from a prescribed target value. We implement this tilting using sequential Monte Carlo (SMC), evaluating the reward either directly on the evolving configuration or through a Tweedie estimate of the terminal configuration. When the reward is differentiable, its gradient on the continuous variables $\mathbf{u},v$ can additionally be incorporated as guidance during generation (Methods).
As shown in Fig.\,\ref{fig3}d, the resulting distributions shift systematically towards the prescribed SRO targets, reaching $\alpha_{\mathrm{CrCr}}=+0.10$ from a base value of $+0.02$ and $\alpha_{\mathrm{CrCo}}=-0.10$ from $-0.01$. The same strategy also generalizes beyond structural observables to mechanical properties. 
Using a linear reward in the bulk modulus $B$, $r(\mathbf{a},\mathbf{x})=B(\mathbf{a},\mathbf{x})$, the SMC procedure shifts the generated ensemble towards larger $B$, with both reward-only and gradient-guided variants efficiently enriching configurations with high bulk modulus.

Finally, we combine JANUS with an LLM-driven evolutionary algorithm (LLM-EA) framework for multi-objective inverse design across composition space (Fig.\,\ref{fig3}e). 
The search explores fcc alloys drawn from Ag, Al, Au, Cu and Ni using a single JANUS model pretrained at $T=900$ K across chemical-potential differences $\Delta\mu$ among all five elements. At each iteration, the LLM agent proposes chemical-potential conditions for an isobaric SGC ensemble, and JANUS samples the corresponding equilibrium alloy. The resulting equilibrium composition and evaluated properties are then returned to the agent to guide subsequent proposals.
We jointly optimize the bulk modulus $B$, shear modulus $G$, Pugh ratio $B/G$, a commonly used proxy for ductility, and the visible-range-averaged optical reflectance $R_{\mathrm{visible}}$. 
The mechanical moduli can be evaluated relatively efficiently using universal machine learning interatomic potentials (MLIPs) such as MACE-MPA-0, whereas $R_{\mathrm{visible}}$ requires substantially more expensive first-principles calculations. We therefore first construct a database of optical properties for alloys spanning the target chemical space using DFT, and train a surrogate GNN to predict $R_{\mathrm{visible}}$. This surrogate is then used to efficiently evaluate optical response across JANUS-generated ensembles. Further details of the LLM protocol and property proxy models are provided in Methods.
As shown in Fig.\,\ref{fig3}e, LLM-EA rapidly expands the non-dominated region of property space, with the dominated hypervolume approaching saturation within only 120 ensemble evaluations across chemical-potential conditions. The resulting candidates form a broad Pareto front that balances mechanical stiffness, ductility and optical response without collapsing onto a single composition family.
The elemental statistics further reveal systematic chemical preferences across the Pareto front. Al appears most frequently and contributes the largest mean fraction, primarily occupying the high-reflectance region, while Au also contributes prominently despite occurring in fewer candidates. Ni favors the high-stiffness region, Ag contributes to both reflectance and ductility, and Cu populates more intermediate trade-off regions. 
The rapidly identified Pareto front provides practical design rules for balancing mechanical and optical properties and can be readily extended to other multi-objective optimization problems, given suitable property proxies.

\begin{figure}[!h]
  \centering
  \includegraphics[width=\textwidth]{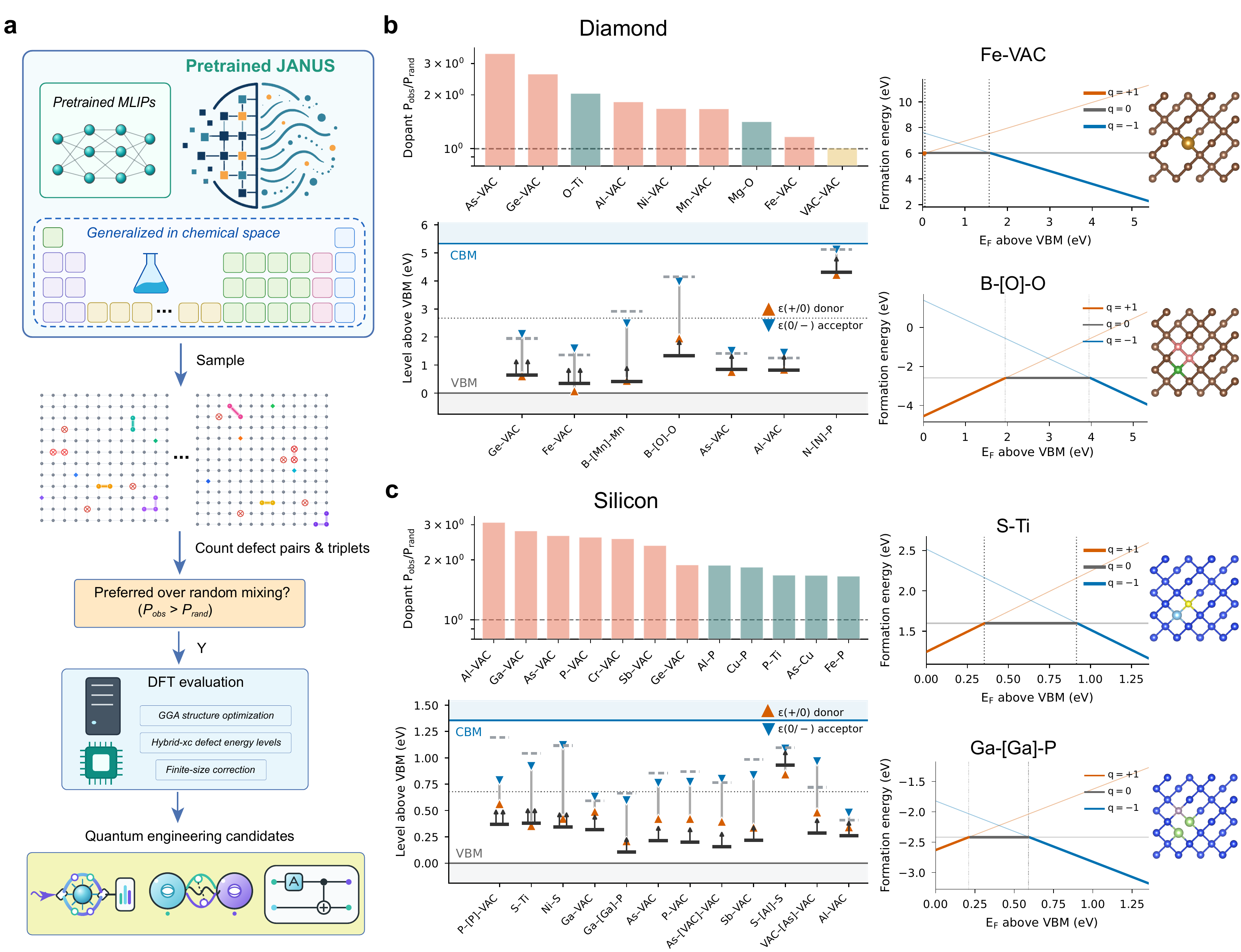}
  \caption{\textbf{Generative discovery of defect candidates for quantum engineering.} \textbf{a.} Schematic of the discovery workflow. An amortized JANUS sampler generates substitutional dopants, vacancies and multi-defect configurations across chemical space, under a pretrained machine learning interatomic potential (MLIP).
  Nearest-neighbour pairs and connected triplets are extracted from the generated ensemble and compared with composition-matched random mixing ($P_\text{obs}>P_\text{rand}$). Enriched motifs are subsequently evaluated using density functional theory (DFT) to identify potential quantum-engineering candidates.
  \textbf{b.} Discovery of quantum defects in diamond. Top, enrichment of selected defect motifs $P_\text{obs}/P_\text{rand}$. The dashed line denotes random mixing, and bar colors distinguish vacancy-dopant complexes, co-doped pairs and divacancy motifs. Bottom, calculated in-gap electronic states relative to the valence-band maximum (VBM) and conduction-band minimum (CBM), with the occupied HOMO shown by a solid black line and the unoccupied LUMO by a dashed grey line. The total spin of the neutral ground state is indicated alongside. Thermodynamic charge-transition levels (CTLs), $\varepsilon(+/0)$ and $\varepsilon(0/-)$, are denoted by orange and blue triangles, respectively. 
  Right, formation energies of the $q=+1,0,-1$ charge states w.r.t. Fermi level ($E_F$) for the highlighted Fe-VAC and B-[O]-O candidates, together with illustrations of their defect structures. Vertical dotted lines mark the CTLs. 
  \textbf{c.} Discovery of quantum defects in silicon, highlighting the S-Ti and Ga-[Ga]-P candidates. In the notation for defect triplets, the bracketed species denotes the central atom bonded to the two terminal defect atoms.}
  \label{fig4}
\end{figure}

\subsection*{Discovery of defect candidates for quantum engineering}
Point defects provide a rich platform for quantum sensing, communication and computation, but identifying useful quantum defects requires searching a combinatorial space of elemental substitutions, vacancies and multi-site complexes \cite{wolfowicz2021quantum,ping2021computational,thomas2024substitutional,fang2026towards}. 
We therefore use JANUS as a generative discovery engine to directly explore this chemical space of defects, as is shown in Fig.\,\ref{fig4}a. 
Starting from a pretrained MLIP MACE-MPA-0 \cite{batatia2022mace,batatia2025foundation}, a single amortized grand-canonical sampler is trained to jointly generate chemical occupations and local atomic relaxations across a broad defect alphabet at temperature $T=800$ K. Because each lattice site is unmasked independently during generation, JANUS imposes no predefined ordering on defect formation and can simultaneously sample isolated substitutions, vacancy complexes and higher-order multi-defect motifs, allowing favourable local chemical correlations to emerge directly from the target ensemble.
We apply this strategy to diamond and silicon, two prototypical semiconductors for quantum technology applications. Large ensembles generated by JANUS on supercells are mapped onto their nearest-neighbour defect graphs and decomposed into defect motifs, with a particular focus on pairs and connected triplets. 
To distinguish genuine chemical association from random co-occurrence, we compared the observed frequency of each pair or triplet motif ($P_\text{obs}$), with its expected frequency under composition-matched random mixing ($P_\text{rand}$). The resulting enrichment ratio $P_\text{obs}/P_\text{rand}$ is used to screen and rank defect motifs for subsequent DFT validation. More details on the training and evaluation procedure are shown in Methods.

The top-ranked enriched defect pairs for diamond and silicon are shown in Fig.\,\ref{fig4}b,c. In diamond, JANUS reveals pronounced enrichment of several vacancy-dopant complexes (Fig.\,\ref{fig4}b). For example, the most enriched As-VAC occurs approximately threefold more frequently than expected from random mixing, while several additional vacancy complexes and heteroatomic co-doping motifs are also preferentially formed. 
Notably, this unbiased search recovers the Ge-VAC centre, an experimentally established colour centre in diamond with optically addressable spin states \cite{iwasaki2015germanium,siyushev2017optical}.
Similarly, in silicon, JANUS retrieves several canonical vacancy-dopant complexes, including the P-VAC $E$ centre and the experimentally identified As-VAC and Sb-VAC centres \cite{watkins1964defects,elkin1968defects}. The spontaneous recovery of these established defects validates that enrichment in the generated ensemble captures physically meaningful defect association, while simultaneously proposing novel defect motifs, including pairs and even less explored triplets, for subsequent quantum-defect screening.

We then evaluate the enriched motifs in bulk silicon or diamond supercells using DFT, with structural relaxation followed by hybrid-functional single-point electronic-structure calculations, including screening on charged states, spin states and finite-size corrections \cite{freysoldt2014first} (see full details in Methods). 
We aim to identify neutral defects with an open-shell neutral ground state ($S>0$), thermodynamic CTLs defining a finite neutral charge-state stability window within the band gap, and an occupied in-gap defect state.
The top candidates satisfying these screening criteria are summarized for diamond and silicon in Fig.\,\ref{fig4}b,c. For each motif, the thermodynamic neutral-state window is bounded by the $\varepsilon(+/0)$ and $\varepsilon(0/-)$ CTLs, shown by blue and orange triangles between the valence-band maximum (VBM) and conduction-band minimum (CBM) for the pristine bulk, while the highest occupied and lowest unoccupied in-gap Kohn-Sham states (HOMO/LUMO) are shown as solid and dashed black/grey lines, respectively, along with the total spin of the ground state ($S=1/2$ or $S=1$). 
The resulting candidates span a remarkably broad chemical and structural space, including dopant-vacancy pairs, co-doped substitutional pairs, vacancy-containing triplets and purely substitutional three-site motifs, with elements drawn from groups III and V as well as transition metals. 

From these screened motifs, we highlight four representative candidates together with their charge-state formation energies in the rightmost column of Fig.\,\ref{fig4}b,c. In diamond, Fe-VAC ($S=1$) represents a transition-metal--vacancy centre. Its neutral ($q=0$) state is thermodynamically stable over a finite Fermi-level interval bounded by CTLs, while the associated occupied defect state lies inside the band gap, albeit closer to the VBM. 
By contrast, B-[O]-O ($S=1/2$) is a substitutional three-site complex and exhibits a substantially deeper neutral stability window, with both CTLs and the occupied deep in-gap state displaced further from the band edges. This more isolated electronic structure is particularly attractive for suppressing hybridization with bulk states and motivates further optical characterization.
Similarly in silicon, candidates such as S-Ti ($S=1$) and Ga-[Ga]-P ($S=1/2$) exhibit stable neutral charge states and in-gap defect energy levels despite arising from chemically and structurally distinct motifs. 
We note that our DFT calculations do not by themselves establish these motifs as definitive quantum-engineering candidates. Further validation would require calculating excited-state properties, such as zero-phonon-line energies for optical addressability, Huang-Rhys factors for electron-phonon coupling, and spin-relaxation for long-lived spin control.
Nevertheless, the JANUS discovery protocol provides a high-confidence first-principles shortlist at a fraction of the cost of exhaustive enumeration. By coupling grand-canonical generation, statistical motif discovery and electronic-structure validation, JANUS enables broad chemical-space defect discovery and systematically uncovers unconventional clustered-defect motifs.

\section*{Discussion}
In this work, we have introduced JANUS, a multimodal neural sampler that jointly treats discrete chemical identities and occupancies with continuous atomic displacements and cell-volume fluctuations, trained directly on the target interatomic potential without relying on precomputed reference data. This coupling enables direct sampling in isobaric semi-grand-canonical $\Delta\mu NPT$ and grand-canonical $\mu VT$ ensembles, extending neural sampling beyond previous approaches restricted to either continuous $NVT$ ensembles or purely discrete configurations. 
To our knowledge, joint discrete-continuous thermodynamic sampling has not previously been realized by machine learning models. JANUS unifies these coupled degrees of freedom within a single amortized generative framework, enabling thermodynamic sampling, free energy and phase-equilibrium calculations, conditional steering, inverse design and structural motif discovery across alloys and crystalline defects. Its sampling efficiency substantially reduces the number of energy evaluations required, making direct thermodynamic sampling with computationally demanding universal MLIPs practical.
A notable demonstration is the discovery of promising quantum-engineering defect candidates, including Fe-VAC and B-[O]-O in diamond and S-Ti and Ga-[Ga]-P in silicon. These capabilities in JANUS mark a crucial step towards efficient sampling, characterization and discovery of realistic disordered materials.

Nevertheless, two major challenges remain before universal thermodynamic neural sampling can be achieved. First, the present JANUS model is demonstrated on moderate lattice sizes, and the free energy and phase-equilibrium demonstrations are primarily limited to binary alloys.
Improving its scalability and accuracy, particularly in the discrete diffusion channel, will be important for resolving sharper first-order transitions and quantitatively precise phase boundaries in more complex systems. This will also be necessary to extend free energy and phase-equilibrium calculations from binary alloys to ternary and higher-order compositions.
Second, JANUS is presently restricted to lattice-based solid phases with cubic simulation cells and therefore does not capture melting, liquid states or general non-cubic lattices. Extending the framework to flexible cell shapes and beyond crystalline basins would enable treatment of a broader range of crystal phases and complete solid-liquid phase diagrams, which are central to understanding experimental alloy synthesis, processing and phase selection \cite{kattner2016calphad}.
More generally, allowing particle insertion and deletion without predefined lattice sites would enable general off-lattice grand-canonical neural sampling, unlocking sampling scenarios such as gas adsorption, surface reactions and solid-liquid interfaces. 
Reaching this level of flexibility requires extending lattice-anchored displacements to fully off-lattice atomic coordinates, while jointly handling chemical identity, particle number, cell geometry and dynamically changing atomic coordination.
Ultimately, a universal neural sampler spanning these degrees of freedom and thermodynamic ensembles is the most desirable goal, which would require both methodological advances and larger-scale training.

Despite these limitations, JANUS provides a versatile foundation for substantially broader applications in disordered materials. Flexible thermodynamic sampling with JANUS, combined with conditional generation through inference-time reward tilting, could enable both structure determination and inverse design across diverse disordered materials.
For alloys and defects, where local atomic arrangements can be difficult to resolve experimentally \cite{he2024quantifying,cheng2026foundation}, experimental diffraction and spectroscopic signals could be incorporated as constraints on a generative prior to solve the inverse problem and infer compatible structural ensembles.
In parallel, rewards defined on structural, mechanical, electrical, magnetic, optical or chemical properties could steer the same pretrained sampler towards targeted functional materials without retraining for each objective.
Moreover, combined with the previous continuous-space sampler \cite{cheng2026atlas} for amorphous materials, JANUS suggests a path towards systems in which chemical and structural disorder coexist across phases and length scales, including oxide glasses, disordered ceramics, solid electrolytes and solid-liquid interfaces.
At even larger length scales, transferable local neural samplers could be deployed around extended defects such as grain boundaries, dislocations and heterogeneous interfaces, where chemical segregation, local reconstruction and strain are strongly coupled. 
By unifying multimodal discrete-continuous thermodynamic sampling, JANUS paves the way towards general neural sampling of disordered condensed matter.

\section*{Methods}
\subsection*{Stochastic interpolants on continuous and discrete domains}
As in the Overview, a configuration $(\mathbf a,\mathbf x)=(\mathbf a,\mathbf u,v)$ combines the discrete species channel $\mathbf a\in\mathcal A^N$, with $\mathcal A=\{1,\ldots,K\}$ the token alphabet, and the continuous channels $\mathbf x=(\mathbf u,v)$---the fractional displacements $\mathbf u\in\mathbb R^{N\times3}$ and the log-volume $v\in\mathbb R$. In this subsection we fix a single thermodynamic state $(T,\Delta\mu)$, where $\Delta\mu$ collects the chemical-potential differences of the active tokens, and write $\pi_1$ for the target density of Eq.~\eqref{eq:target}; the amortized, conditional version is described below. JANUS constructs a generative process that transports a simple initial law $\pi_0$ at time $t=0$ onto $\pi_1$ at $t=1$, treating all three channels jointly: the continuous channels start from a product prior $\pi_{0,c}(\mathbf x)=\pi_{0,u}(\mathbf u)\,\pi_{0,v}(v)$, while the species channel starts from a \emph{fully masked} state, in which every site carries an auxiliary absorbing symbol $\texttt{M}\notin\mathcal A$ and acquires its species label only during generation. For the continuous channels, $\pi_{0,u}$ and $\pi_{0,v}$ are Gaussians: since we study crystal structures on a lattice below the melting point, atoms vibrate around their reference sites and the cell fluctuates narrowly around its equilibrium volume, so Gaussian laws centred on the reference lattice leave only a short prior-to-target transport for the network to learn; the physics-informed parameter choices used in practice are specified later. The construction extends stochastic interpolants \cite{albergo2025stochastic} to a mixed continuous--discrete state space.

\paragraph{Continuous channels.} For each continuous channel $c\in\{u,v\}$, writing $x^c$ for the corresponding component of $\mathbf x$, a prior draw $x_0^c$ and a terminal configuration $x_1^c$ are connected by the deterministic interpolant
\begin{equation}
x_t^c=(1-\alpha(t))\,x_0^c+\alpha(t)\,x_1^c,
\label{eq:cont-interpolant}
\end{equation}
where $\alpha(t)$ increases monotonically from $\alpha(0)=0$ to $\alpha(1)=1$, pinning $x_t^c$ to the prior draw at $t=0$ and to the terminal at $t=1$. The interpolant defines time-dependent marginals $p_t$ that connect $\pi_0$ and $\pi_1$; because the prior is Gaussian, $p_t$ has a smooth density at every intermediate time even though no noise is injected along the path. The stochastic dynamics realizing these marginals are built from two fields per channel: a velocity $b^c$ and a score $s^c=\nabla_{x^c}\log p_t$. Both are conditional expectations over the interpolant. The velocity is
\begin{equation}
b^c(\mathbf a,\mathbf x,t)=\mathbb E\big[\dot\alpha(t)\,(x_1^c-x_0^c)\,\big|\,(\mathbf a_t,\mathbf x_t)=(\mathbf a,\mathbf x)\big],
\label{eq:cont-velocity}
\end{equation}
where the conditioning is on the full state $(\mathbf a_t,\mathbf x_t)$, and the score satisfies the target score identity \cite{de2024target}
\begin{equation}
s^c(\mathbf a,\mathbf x,t)=\mathbb E\Big[\tfrac{1}{\alpha(t)}\,\nabla_{x_1^c}\log \pi_1(\mathbf a_1,\mathbf x_1)\,\Big|\,(\mathbf a_t,\mathbf x_t)=(\mathbf a,\mathbf x)\Big].
\label{eq:tsi}
\end{equation}
Hence, wherever an interpolant path visits, the target score evaluated at the path's endpoint provides an unbiased estimate of the marginal score there, so the physics enters the model directly through derivatives of the target density rather than through reference data.

\paragraph{Discrete channel.} The species channel is transported by an absorbing-mask interpolant \cite{austin2021structured, shi2024simplified}: given the terminal symbol $a_{1,i}$, each site is drawn independently from
\begin{equation}
\mathbb P\big(a_{t,i}=b'\,\big|\,a_{1,i}=b\big)=\alpha_a(t)\,\mathbf 1\{b'=b\}+\big(1-\alpha_a(t)\big)\,\mathbf 1\{b'=\texttt{M}\},
\label{eq:disc-interpolant}
\end{equation}
where the reveal schedule $\alpha_a(t)$ increases monotonically from $\alpha_a(0)=0$ to $\alpha_a(1)=1$ (we use $\alpha_a(t)=t$): at time $t$ a site shows its terminal symbol with probability $\alpha_a(t)$ and the mask otherwise, and no site ever shows a \emph{wrong} symbol. The endpoints are again pinned---the all-mask state at $t=0$, the terminal at $t=1$. The generative dynamics realizing these marginals is a pure reveal process: each site is unmasked exactly once, at a random time governed by $\alpha_a$, and never re-masked. A single learned object controls this process, the denoising posterior
\begin{equation}
q_{\theta,i}(b\mid \mathbf a,\mathbf x)\;\approx\;\mathbb P\big(a_{1,i}=b\,\big|\,(\mathbf a_t,\mathbf x_t)=(\mathbf a,\mathbf x)\big),
\label{eq:masked-head}
\end{equation}
the conditional law of site $i$'s terminal symbol given the current partially revealed state---one categorical network head per site. For a revealed site this posterior is a point mass at the shown symbol; for a masked site it is the model's completion law given everything revealed so far, and the exact posterior is independent of $t$ given the revealed set, since the mask carries no information beyond which sites are hidden \cite{ou2024your}. Whenever the schedule reveals a masked site during generation, its symbol is drawn from $q_{\theta,i}$ evaluated at the current state.

\paragraph{Numerical integration.} Samples are generated by integrating the coupled dynamics from a prior draw $(\mathbf a_0,\mathbf x_0)$, the all-mask state paired with $\mathbf x_0\sim\pi_0$, to $t=1$ on a shared grid $t_n=nh$ with step $h=1/M$. The continuous channels follow the generative stochastic differential equation \eqref{eq: fwd sde} discretized with the Euler--Maruyama scheme,
\begin{equation}
x_{n+1}^c=x_n^c+\big[b_\theta^c+g^2(t_n)\,s_\theta^c\big](\mathbf a_n,\mathbf x_n,t_n)\,h+\sqrt{2g^2(t_n)h}\;\xi_n^c,\qquad \xi_n^c\sim\mathcal N(0,I),
\label{eq:em}
\end{equation}
where $b_\theta^c$ and $s_\theta^c$ denote the network approximations of the velocity \eqref{eq:cont-velocity} and score \eqref{eq:tsi} (their training is described in the next subsection), and the diffusion strength $g(t)\ge0$ is a free parameter: the score-corrected drift leaves the marginals $p_t$ unchanged for every $g$, and $g=0$ recovers the deterministic probability-flow integrator. 

The species channel follows an absorbing-mask process with the linear reveal schedule \(\alpha_a(t)=t\). During \([t_n,t_{n+1}]\), every still-masked site is independently revealed with probability
\begin{equation}
p_n=\frac{\alpha_a(t_{n+1})-\alpha_a(t_n)}{1-\alpha_a(t_n)}.
\label{eq:reveal}
\end{equation}
If site $i$ is revealed, its symbol is drawn from
$a_{n+1,i}\sim q_{\theta,i}\left(\cdot\mid\mathbf a_n,\mathbf x_n,t_n,\mathbf c\right)$,
and revealed sites remain fixed thereafter. This is a tau-leap discretization: throughout one interval, all reveals use the posterior evaluated at the interval’s left-end state and are conditionally independent. Consequently, several sites may be revealed simultaneously. As \(h\to0\), simultaneous reveals vanish in probability and the construction approaches sequential autoregressive unmasking, with each reveal conditioned on the previously revealed symbols and the contemporaneous continuous state. Both channel updates use the same network evaluation at \((\mathbf a_n,\mathbf x_n,t_n)\); the evaluation at the resulting state is then reused for the next step.

\subsection*{Data-free fixed-point training}
All learned fields of JANUS are the conditional expectations introduced in the previous subsection: the velocity \eqref{eq:cont-velocity} and score \eqref{eq:tsi} are conditional expectations over the interpolant, and the denoising posterior \eqref{eq:masked-head} is the conditional law of a site's terminal symbol. Estimating any of them requires terminal configurations $(\mathbf a_1,\mathbf x_1)\sim\pi_1$---precisely what a sampler is meant to produce and what is unavailable a priori. JANUS resolves this, without using any reference configurations, by a data-free fixed-point iteration in the spirit of bridge-matching samplers \cite{blessing2026bridge} and related fixed-point diffusion samplers \cite{havens2025adjoint,havens2026flow,liu2026adjoint}: the unavailable terminals in the training targets are replaced by the model's own samples, and the physics enters solely through the labels attached to these self-generated configurations---forces, cell-rescaling derivatives, and single-site heat-bath conditionals, all pointwise evaluations of the interatomic potential. We first specify the regression objectives evaluated on such terminals, then the iteration itself.

\paragraph{Regression objective.} Because the continuous fields \eqref{eq:cont-velocity} and \eqref{eq:tsi} are conditional expectations, they can be learned by least-squares regression onto unbiased single-sample estimates. With a prior draw $(\mathbf x_0,t)$ and a terminal $(\mathbf a_1,\mathbf x_1)$ sampled, and the intermediate state $(\mathbf a_t,\mathbf x_t)$ formed via \eqref{eq:cont-interpolant} and \eqref{eq:disc-interpolant}, the per-sample targets are
\begin{equation}
\hat s^{\,c}=\frac{1}{\alpha(t)}\,\nabla_{x_1^c}\log\pi_1(\mathbf a_1,\mathbf x_1),\qquad
\hat b^{\,c}=\dot\alpha(t)\,(x_1^c-x_0^c),
\label{eq:cont-targets}
\end{equation}
and the continuous loss is
\begin{equation}
\mathcal L_{\mathrm{cont}}(\theta)=\sum_{c\in\{u,v\}}\mathbb E\Big[\big\|b_\theta^c(\mathbf a_t,\mathbf x_t,t)-\hat b^{\,c}\big\|^2+\big\|s_\theta^c(\mathbf a_t,\mathbf x_t,t)-\hat s^{\,c}\big\|^2\Big],
\label{eq:loss-cont}
\end{equation}
whose minimizers are exactly \eqref{eq:cont-velocity} and \eqref{eq:tsi}. In practice, the variance of $\hat s^{\,c}$ can be further reduced; see Section~\ref{si:gtsi}.

\paragraph{Soft cross-entropy objective.} The species head is trained via an analogous fixed-point strategy, but based on a cross-entropy objective. This masked cross-entropy objective mirrors the one from masked discrete diffusion models
\cite{austin2021structured,sahoo2024simple,shi2024simplified,ou2024your}, but instead of relying on one-hot labels encoding the ground-truth terminal symbol, the target marginal density (soft label) for each site is the conditional Boltzmann distribution of this site given the terminal symbols at all other sites. Namely, if we construct $\mathbf a_t$ by masking each symbol in $\mathbf a_1$ with probability $1-\alpha_a(t)$ independently, the soft cross-entropy loss reads
\begin{equation}
\mathcal L_{\mathrm{disc}}(\theta)=-\,\mathbb E\Bigg[\sum_{i:\,a_{t,i}=\texttt{M}}\ \sum_{b\in\mathcal A}\ell_i(b)\,\log q_{\theta,i}(b\mid \mathbf a_t,\mathbf x_t)\Bigg],\qquad
\ell_i(b)=\rho_i(b\mid\mathbf a_{1,-i},\mathbf x_1),
\label{eq:loss-disc}
\end{equation}
where $\rho_i(\,\cdot\mid\mathbf a_{1,-i}, \mathbf x_1)$ is the target's single-site (heat-bath) conditional at the terminal,
\begin{equation}
\rho_i\big(b\,\big|\,\mathbf a_{1,-i},\mathbf x_1 \big)
=\frac{\exp\!\big[\big(\mu_b-U(\mathbf a_1^{(i\to b)},\mathbf x_1)\big)/k_{\mathrm B}T\big]}
{\sum_{b'\in\mathcal A}\exp\!\big[\big(\mu_{b'}-U(\mathbf a_1^{(i\to b')},\mathbf x_1)\big)/k_{\mathrm B}T\big]},
\label{eq:flip-score}
\end{equation}
where $\mathbf a_1^{(i\to b)}$ denotes the terminal species configuration with site $i$ set to $b$; every factor of the target \eqref{eq:target} that does not depend on $a_{1,i}$ cancels between numerator and denominator, and evaluating the $K$ single-site substitution energies requires one batched call to the potential. For a binary alphabet, \eqref{eq:flip-score} reduces to a sigmoid of the flip log-ratio $(\Delta\mu\,\Delta N_{\mathrm B}-\Delta U_i)/k_{\mathrm B}T$, where $\Delta U_i$ is the change in potential energy under the flip at site $i$. The label $\ell_i$ does not depend on the terminal symbol $a_{1,i}$ actually stored---the conditional is a function of the other sites only---and this invariance is precisely what makes it a genuine conditional: at the Boltzmann fixed point, averaging $\ell_i$ over the posterior of the remaining terminal sites given $(\mathbf a_t,\mathbf x_t)$ reproduces the exact denoising posterior, so the minimizer of \eqref{eq:loss-disc} is \eqref{eq:masked-head}. Notably, the energy enters only through the bounded factor $\rho_i\in(0,1)$, rather than appearing directly in the loss.

\paragraph{Fixed-point iteration.} The sampler is additionally amortized over thermodynamic conditions\cite{schebek2024efficient,cheng2026atlas}: all fields are heads of one site-equivariant message-passing network that receives $(T,\Delta\mu)$ as additional inputs (Supplementary Information), so a single model represents the whole family of ensembles $\pi_1(T,\Delta\mu)$ over a prescribed window rather than a single state point. Each training round consists of four steps (Fig.~\ref{fig:m-fixedpoint}). (i)~\emph{Generate}: draw per-chain conditions $(T,\Delta\mu)\sim p(T,\Delta\mu)$---the construction of the conditioning distribution $p$ is described in the Supplementary Information---and integrate \eqref{eq:em}--\eqref{eq:reveal} under these conditions to produce terminal configurations. (ii)~\emph{Label}: evaluate at each terminal the quantities entering the targets---the continuous scores \eqref{eq:cont-targets} and the single-site heat-bath conditionals \eqref{eq:flip-score}---and store them, together with the conditions, in a replay buffer. (iii)~\emph{Interpolate}: draw fresh $(\mathbf x_0,t)$ for the continuous channels and form intermediate states $(\mathbf a_t,\mathbf x_t)$ toward buffered terminals via \eqref{eq:cont-interpolant}, masking each buffered species site independently with probability $1-\alpha_a(t)$ per \eqref{eq:disc-interpolant}. (iv)~\emph{Regress}: minimize the conditional objective
\begin{equation}
\mathcal L(\theta)=\mathbb E_{(T,\Delta\mu)\sim p}\Big[\mathcal L_{\mathrm{cont}}(\theta;T,\Delta\mu)+\lambda\,\mathcal L_{\mathrm{disc}}(\theta;T,\Delta\mu)\Big],
\label{eq:cond-loss}
\end{equation}
where both losses are those introduced above, with all fields and targets evaluated at the stored per-sample conditions, and $\lambda>0$ balances the channels. Under idealized population training, the family of Boltzmann ensembles is a fixed point of this iteration: if the model samples every $\pi_1(T,\Delta\mu)$ exactly and the learned heads $b_\theta^c$, $s_\theta^c$ and $q_{\theta}$ match the conditional expectations of the regression and cross-entropy targets, training is stationary.
For the continuous channels this self-consistency result is inherited from fixed-point diffusion samplers \cite{blessing2026bridge}; for the discrete channel, the corresponding iteration---fitting the denoising posterior by cross-entropy against heat-bath labels evaluated at self-generated terminals---is, to our knowledge, new and hence not studied before. In the Supplementary Information we formalize the discrete iteration and show that the Boltzmann ensemble is indeed a fixed point of it, with the heat-bath conditionals of the target acting as the anchor (Section~\ref{si:discrete-fixed-point}).
Whether the Boltzmann ensemble is the unique fixed point of the iteration is a stronger question that we defer to future work.
Intuitively, the force and heat-bath labels should be the guiding mechanisms toward this fixed point, since each self-generated configuration, however far from equilibrium, is labeled with the true local direction of increasing target probability.

\paragraph{Physics-informed priors.} The masked species channel starts from the all-mask state, so the prior is a design choice for the continuous channels only. Both continuous priors are Gaussians conditioned on $(T,\Delta\mu)$ through a composition estimate $c_0$ built from an inexpensive linear estimate of the phase-transition line: the volume prior is centered on a conditioned per-atom volume and the displacement prior follows the harmonic $\langle u^2\rangle\propto T$ width. All prior parameters are derived from the interatomic potential alone, so no reference data enters training; the full construction and the parameter fits are given in the Supplementary Information (Section~\ref{si:prior-fit}).

\subsection*{Semi-grand partition function estimation and observable reweighting}

The trained sampler gives direct access to the semi-grand partition function $\Xi(T,\Delta\mu,P)$---the normalization of the target ensemble \eqref{eq:target}---by importance sampling \cite{he2026free}; the semi-grand potential follows as $\Phi=-k_{\mathrm B}T\log\Xi$. The two types of channel contribute weights of different character. The species channel's contribution is exact and needs no path-space construction. Conditionally on the reveal times---whose law is fixed by the schedule $\alpha_a$ and is independent of both the network and the realized symbols---the masked sampler is an autoregressive model in a random order \cite{ou2024your}: each site draws its symbol, at the moment it is revealed, from the head evaluated at the then-current state. The probability of the generated species configuration is therefore available in closed form along the trajectory,
\begin{equation}
\log q_{\mathrm{disc}}=\sum_{i=1}^N\log q_{\theta,i}\big(a_{1,i}\,\big|\,\mathbf a_{n_i},\mathbf x_{n_i}\big),
\label{eq:masked-qs}
\end{equation}
where $n_i$ is the step at which site $i$ was revealed; the schedule-only reveal-time factors cancel exactly against the time-reversed (re-masking) kernel, which is built from the same factors and no learned quantity. The continuous channels do require path space. The generative SDE \eqref{eq: fwd sde} admits a time reversal\cite{anderson1982reverse} built from the same learned fields, in which the drift subtracts rather than adds the score term,
\begin{equation}
\mathrm{d}x^c_t=\big[b^c_\theta-g^2(t)\,s^c_\theta\big](\mathbf a_t,\mathbf x_t,t)\,\mathrm{d}t+\sqrt{2}\,g(t)\,\mathrm{d}\bar w^c_t,
\label{eq:bwd-sde}
\end{equation}
integrated backward from $t=1$ to $t=0$ with a reversed-time Wiener process $\bar w^c_t$: forward and backward dynamics share the same marginals and differ only in the sign of the score term. Discretizing either direction with the Euler--Maruyama scheme of \eqref{eq:em} renders every one-step transition Gaussian, i.e., $\mathcal N\big(x_{n+1}^c;\ x_n^c+\big[b_\theta^c+g^2(t_n)\,s_\theta^c\big](\mathbf a_n,\mathbf x_n,t_n)\,h,\ 2g^2(t_n)h\big)$, so the forward--backward path weight of the continuous channels is a product of closed-form Gaussian density ratios. Each generated trajectory $(\mathbf a_{0:M},\mathbf x_{0:M})$ then carries the log weight
\begin{equation}
\log W=\log\pi_1(\mathbf a_M,\mathbf x_M)-\log\pi_0(\mathbf x_0)-\log q_{\mathrm{disc}}+\sum_{n=0}^{M-1}\Delta_n^{uv},
\label{eq:fbrnd}
\end{equation}
where the conditioned prior density $\pi_0$ is available in closed form (the species channel starts from a point mass at the all-mask state and contributes no prior term), and $\Delta_n^{uv}$ is the forward--backward path weight of the step $[t_n,t_{n+1}]$: the log-ratio of the backward and forward Gaussian transition densities, with the backward drift of \eqref{eq:bwd-sde} evaluated at the arrival state,
\begin{equation}
\Delta_n^{uv}=\sum_{c\in\{u,v\}}\log\frac{\mathcal N\big(x_n^c;\ x_{n+1}^c+\big[b_\theta^c-g^2(t_{n+1})\,s_\theta^c\big](\mathbf a_{n+1},\mathbf x_{n+1},t_{n+1})\,h,\ 2g^2(t_n)h\big)}{\mathcal N\big(x_{n+1}^c;\ x_n^c+\big[b_\theta^c+g^2(t_n)\,s_\theta^c\big](\mathbf a_n,\mathbf x_n,t_n)\,h,\ 2g^2(t_n)h\big)}.
\label{eq:fbrnd-uw}
\end{equation}
The exact configuration-space density for $\mathbf a$ and the Gaussian path-space ratio for $(\mathbf u,v)$ are accumulated along a single trajectory and multiply into one importance weight---the free energy estimator is hybrid in the same sense as the sampler itself. The split matters quantitatively: the species term \eqref{eq:masked-qs} is a sum of $N$ realized log-probabilities whose fluctuations reflect the model's error in configuration space, not a sum of $M\times N$ per-step flip log-ratios, so the discrete part of the weight does not degrade as the integration grid is refined. Over $n_{\mathrm{traj}}$ independent trajectories, the weights give an asymptotically unbiased estimate of the partition function \cite{jarzynski1997nonequilibrium},
\begin{equation}
\Xi=\mathbb E\big[e^{\log W}\big],\qquad
\log\widehat\Xi=\operatorname*{logsumexp}_{j\le n_{\mathrm{traj}}}\ \log W^{(j)}-\log n_{\mathrm{traj}}.
\label{eq:fbrnd-Z}
\end{equation}
Because one amortized model covers the whole $(T,\Delta\mu)$ window, Eq.~\eqref{eq:fbrnd-Z} directly yields partition-function ratios---and hence semi-grand-potential differences---across thermodynamic conditions from a single training run. The same weights also serve for reweighting: for any observable $O$ of the terminal configuration, the self-normalized estimator
\begin{equation}
\widehat{\langle O\rangle}_{\pi_1}=\frac{\sum_{j}e^{\log W^{(j)}}\,O\big(\mathbf a_M^{(j)},\mathbf x_M^{(j)}\big)}{\sum_{j}e^{\log W^{(j)}}}
\label{eq:reweighted-obs}
\end{equation}
converges to the exact ensemble average $\langle O\rangle_{\pi_1}$, so it remains asymptotically unbiased under model imperfection---residual model error is traded for weight variance rather than bias.

\subsection*{Mixing free energy and binodal estimation}
Phase equilibria follow from resolving the semi-grand ensemble by composition. Writing $n=N_{\mathrm B}$ and $x=n/N$, the semi-grand partition function $\Xi(T,\Delta\mu,P)$ decomposes into canonical (fixed-composition) isothermal--isobaric partition functions $Z_n$,
\begin{equation}
\Xi(T,\Delta\mu,P)=\sum_{n=0}^{N}e^{\Delta\mu\,n/k_{\mathrm B}T}\,Z_n(T,P),\qquad
Z_n(T,P)=\sum_{\mathbf a:\,N_{\mathrm B}(\mathbf a)=n}\ \int\!\mathrm d\mathbf u\,\mathrm dv\ e^{-\left[U(\mathbf a,\mathbf u,v)+PV\right]/k_{\mathrm B}T},
\label{eq:xi-decomp}
\end{equation}
and the composition-resolved Gibbs free energy is $G(n)=-k_{\mathrm B}T\log Z_n$, defined up to an $n$-independent constant. The free energy of mixing per atom subtracts the chord between the pure phases,
\begin{equation}
G_{\mathrm{mix}}(x)=\frac1N\Big[G(xN)-(1-x)\,G(0)-x\,G(N)\Big],
\label{eq:fmix}
\end{equation}
which cancels every term linear in $n$, so $G_{\mathrm{mix}}$ is invariant both under that constant and under any $\Delta\mu$ tilt of the ensemble in which $G$ was measured. Phase coexistence is encoded in $G$ through the common-tangent construction: at coexistence the two phases share the exchange chemical potential and the tangent,
\begin{equation}
G'(n_\alpha)=G'(n_\beta)=\Delta\mu^{*}(T),\qquad
G(n_\beta)-G(n_\alpha)=\Delta\mu^{*}(T)\,(n_\beta-n_\alpha),
\label{eq:tangent}
\end{equation}
so the coexisting compositions $x_\alpha(T),x_\beta(T)$ are the tangency points of the double tangent and its slope is the coexistence potential $\Delta\mu^{*}(T)$; repeating the construction along isotherms traces the binodal, with convex $G$ signaling complete miscibility. Two complementary routes lead from the sampler to $G(n)$, distinguished by which part of the curve they can reach.

\paragraph{Legendre-Fenchel transform.} The semi-grand potential $\Phi(T,\Delta\mu)=-k_{\mathrm B}T\log\Xi$ is the Legendre--Fenchel transform of $G$ in the conjugate pair $(n,\Delta\mu)$, and the inverse transform
\begin{equation}
G_{\mathrm{hull}}(n)=\sup_{\Delta\mu}\big[\Delta\mu\,n+\Phi(T,\Delta\mu)\big]
\label{eq:legendre-hull}
\end{equation}
recovers the lower convex envelope of $G$ from the grand potential alone. Because one amortized model covers the whole $(T,\Delta\mu)$ window, evaluating Eq.~\eqref{eq:fbrnd-Z} on a grid of chemical potentials and applying Eq.~\eqref{eq:legendre-hull} yields $G_{\mathrm{hull}}$ at inference cost. Wherever $G$ is convex (outside a miscibility gap, and at every composition in a fully miscible system) the envelope \emph{is} the curve, and this route determines $G_{\mathrm{mix}}$ completely. The Cu--Ni results of Fig.~\ref{fig2} are computed through this route, where the envelope agrees with the reference curve to within a few meV/atom.

\paragraph{Fixed composition sampling.} Inside a miscibility gap the envelope replaces the curve and the same suppression degrades importance-weighted read-outs of a $\Delta\mu$-conditioned model at strongly two-phase compositions. Both problems are addressed by training the sampler at fixed composition. During the self-bootstrapped iteration the rollouts are generated by fixed composition sampling: the reveal steps of the species channel are restricted so that exactly $n$ sites carry species B, with the conditioning features set to the rung composition $c_0=n/N$; the continuous channels, the replay buffer and the training labels are unchanged. One amortized model thereby learns the \emph{canonical} (fixed-composition) ensembles of every rung $n=0,\dots,N$ across the temperature window and generates any rung ensemble at inference cost. The free energy follows from the same path-weight machinery as $\Xi$: applied to a constrained rollout, whose terminal law is the canonical rung ensemble, the forward--backward weight of Eq.~\eqref{eq:fbrnd} normalizes to $Z_n$ instead of $\Xi$ (up to the known tilt $e^{\Delta\mu\,n/k_{\mathrm B}T}$), so Eq.~\eqref{eq:fbrnd-Z} evaluates $G(n)=-k_{\mathrm B}T\log Z_n$ rung by rung from the model's own trajectories. In practice we sharpen this read by joining the generated ensembles of neighboring rungs through exact single-site substitution energies, at the cost of two additional batched energy evaluations per draw; the estimator is detailed in the Supplementary Information, Section~\ref{si:ladder-bar}, alongside the canonical Monte Carlo composition ladder that supplies the reference values (Section~\ref{si:ref-data}). Beyond reaching inside the gap, the ladder is what makes the phase-diagram read \emph{self-contained}: applying Eq.~\eqref{eq:tangent} to the model's own $G(n)$ delivers the coexistence potential $\Delta\mu^{*}(T)$ together with the coexisting compositions. All Cu--Ag and Ni--Cr free energies and binodals in this work are computed through this route.

\paragraph{Frenkel-Ladd absolute free energies}
The free-energy methods described above determine mixing and relative free energies within a given crystal phase. To compare competing phases with different crystal structures, as required for the fcc-bcc phase competition in Ni-Cr, their free energies must additionally be placed on a common absolute scale. 
We obtain these absolute references from the pure-element Ni-fcc, Cr-fcc, Ni-bcc and Cr-bcc crystals using the Frenkel-Ladd method \cite{frenkel1984new}. In this approach, each crystal is coupled to an Einstein reference in which every atom is harmonically tethered to its corresponding lattice site,
\begin{equation}
U_{\mathrm{spring}}=\frac{1}{2}k\sum_i\left|\mathbf{r}_i-\mathbf{r}_i^0\right|^2.
\end{equation}
where $\mathbf{r}_i$ are the absolute coordinates of atoms. The Einstein reference and target system are connected through $U(\lambda)=(1-\lambda)U_{\mathrm{spring}}+\lambda U_{\mathrm{target}}$, where $U_{\mathrm{target}}$ denotes the target interatomic potential. 
Thermodynamic integration then gives the classical Helmholtz free energy per atom as
\begin{equation}
f_{\mathrm{target}}=f_{\mathrm{Einstein}}+\frac{1}{N}\int_0^1\left\langle U_{\mathrm{target}}-U_{\mathrm{spring}}\right\rangle_\lambda\,d\lambda,
\end{equation}
with
\begin{equation}
f_{\mathrm{Einstein}}=3k_{\mathrm B}T\ln\left(\frac{\hslash\omega_E}{k_{\mathrm B}T}\right),\qquad\omega_E=\sqrt{\frac{k}{m}}.
\end{equation}
The spring constant is chosen as the mean Cartesian force constant, $k=\mathrm{tr}(H)/(3N)$, obtained from the Hessian $H$ of the perfect crystal. 
Canonical averages at each coupling parameter are evaluated using Langevin molecular dynamics, with the cell volume independently relaxed at each temperature under zero external pressure, such that the resulting equilibrium Helmholtz free energy is equivalent to the Gibbs free energy per atom at $P=0$.
The resulting pure-element absolute free energies provide the reference offsets required to place the fcc and bcc free-energy curves on the same scale and directly determine their relative thermodynamic stability.

\subsection*{Inference-time steering to tilted ensembles}
It is often desirable to sample not the learned ensemble itself but a tilted version of it,
\begin{equation}
p_\eta(\mathbf a,\mathbf x)\;\propto\;p_{\text{base}}(\mathbf a,\mathbf x)\,e^{\eta\,r(\mathbf a,\mathbf x)},
\label{eq:tilt}
\end{equation}
where $p_\text{base}$ is the terminal distribution of the trained sampler, $r$ is a reward defined on terminal configurations and $\eta$ controls the strength of the tilt.
When many different tilts of the same base model are needed, one can tilt at inference time, without retraining. Here we use the Radon-Nikodym estimator \cite{he2026rne}: a population of weighted particles is propagated by a \emph{proposal} version of the hybrid dynamics and continuously reweighted so that the terminal population approximates $p_\eta$.
The construction has two free ingredients, neither of which affects correctness. The first is an intermediate reward $r_t$ with the boundary conditions $r_0\equiv0$ and $r_1=\eta r$: the first condition keeps the initialization from the prior exact, the second makes the terminal target the desired tilt, and between the endpoints $r_t$ sets how gradually the tilt is introduced along the rollout. The second is the proposal dynamics that actually propagates the particles. It mirrors the pretrained integrator \eqref{eq:em}--\eqref{eq:reveal}: the species channel is propagated by the pretrained kernel itself---reveal times and revealed symbols are drawn exactly as in \eqref{eq:reveal}---while each continuous channel advances with an arbitrary proposal drift $a^c$ with transition density $\mathcal N\big(x^c_n+a^c(\mathbf a_n,\mathbf x_n,t_n)\,h,\;2g^2(t_n)h\big)$. After each step the log-weight of a particle is updated by the reward increment plus the mismatch between the pretrained and the proposal dynamics; the species factors are shared and cancel, so only a ratio of Gaussian transition densities for the continuous channels remains,
\begin{equation}
\begin{aligned}
\Delta\log w_n&=r_{t_{n+1}}(\mathbf a_{n+1},\mathbf x_{n+1})-r_{t_n}(\mathbf a_n,\mathbf x_n)\\
&\quad+\sum_{c\in\{u,v\}}\log\frac{\mathcal N\big(x^c_{n+1};\;x^c_n+\big[b^c_\theta+g^2(t_n)\,s^c_\theta\big](\mathbf a_n,\mathbf x_n,t_n)\,h,\;2g^2(t_n)h\big)}{\mathcal N\big(x^c_{n+1};\;x^c_n+a^c(\mathbf a_n,\mathbf x_n,t_n)\,h,\;2g^2(t_n)h\big)},
\end{aligned}
\label{eq:smc-weight}
\end{equation}
so every factor is available in closed form along the trajectory. The weights compensate exactly for any choice of $a^c$---the proposal may, but need not, be biased toward the reward---and particles are resampled whenever the effective sample size drops below a threshold.
A natural choice biases the proposal toward the reward: we take the pretrained drift plus a guidance term,
\begin{equation}
a^c=b^c_\theta+g^2(t)\,s^c_\theta+\lambda_c(t)\,\nabla_{x^c}r_t,
\label{eq:guided-drift}
\end{equation}
which pushes particles toward higher intermediate reward when $r_t$ is differentiable; the reward itself need not be differentiable, as its gradient enters only through this optional term. With $\lambda_c=0$ the proposal coincides with the pretrained sampler, the density ratio in \eqref{eq:smc-weight} vanishes, and the scheme reduces to pure reweighting of the unmodified sampler. The reward tilt itself thus acts on all three channels through the weights and the resampling, while guidance biases only the continuous channels; a more general construction that additionally biases the species proposal toward the reward (\emph{tilted decoding}) is treated in the Supplementary Information (Section~\ref{si:steering-general}). The weights are exact for any guidance strengths and any intermediate reward respecting the boundary conditions, so these choices control only the variance.

For the intermediate reward we use two constructions. When $r$ is meaningful on partially generated states the annealed reward $r_t=\alpha_r(t)\,\eta\,r(\mathbf a_t,\mathbf x_t)$ scores the current state directly under a ramp $\alpha_r$. When $r$ is defined only on clean terminal structures, we score the posterior-mean (Tweedie) estimate \cite{robbins1992empirical} of the terminal state instead: the continuous channels admit the closed-form posterior mean $\hat x^c_1=x^c+\frac{1-\alpha(t)}{\dot\alpha(t)}\,b^c_\theta(\mathbf a,\mathbf x,t)$, and for the species channel the head $q_{\theta,i}$ of Eq.~\eqref{eq:masked-head} \emph{is} the posterior over the terminal identity. The derivation of the weights and a comparison of reward and guidance choices are provided in the Supplementary Information Sections~\ref{si:steering},\ref{si:steering-choices}.

\subsection*{Model architecture and size-transfer}
The velocities and scores of the two continuous channels and the species posterior are heads of a single equivariant message-passing network with a shared PaiNN backbone \cite{schutt2021equivariant} adapted to the three channels. Nodes are lattice sites carrying invariant scalar and equivariant vector features; messages are exchanged on a graph rebuilt from the live configuration at every evaluation, with pairwise minimum-image displacements measured in physical units at the current box length $L=e^{v/3}$, so the connectivity follows the fluctuating cell. Pair features expand the interatomic distance in a Gaussian radial basis under a smooth polynomial cutoff envelope, with the cutoff radius matched to the interaction range of the potential ($5.0$--$5.3$\,\AA{} for the alloys studied here).

Because every learnable parameter is size-intrinsic---per-site embeddings, per-edge messages confined to a fixed physical cutoff, per-site heads, and intensive pooled read-outs---the same weights define a sampler for any supercell: a model trained on a cell of $N$ lattice sites is instantiated on a cell of $N'$ sites by replacing only the reference lattice. The channels then transform according to their physical scaling: per-site quantities (species logits, displacement fields in physical units) carry over unchanged, while the collective log-volume channel is emitted as an intensive field and converted with the instantiated cell's $\sqrt{N'}$---its equilibrium fluctuations shrink as $1/\sqrt{N'}$ and its score grows as $\sqrt{N'}$---with the prior widths of Eqs.~\eqref{eq:prior-w}--\eqref{eq:prior-u} rescaled by the same factors. Under these conventions the network is by construction exactly equivariant under periodic tiling of the simulation cell---per-site outputs are unchanged and the volume outputs rescale by $\sqrt{N/N'}$---provided the cutoff respects the minimum-image bound $r_{\mathrm{cut}}\le L/2$. A model trained on a moderate cell can therefore be evaluated directly on substantially larger cells without retraining\cite{schebek2026scalable}.

\subsection*{Target systems, interatomic potentials and property evaluation}
The binary Cu-Ni, Cu-Ag, Ni-Cr alloys and ternary Cr-Co-Ni alloys are described using the embedded-atom method (EAM) potentials available in the NIST Interatomic Potentials Repository \cite{hale2018evaluating,fischer2019systematic,williams2006embedded}.
Alloys containing up to three elements selected from the Ag-Al-Au-Cu-Ni chemical space, together with all systems with defects, are described using the universal MACE-MPA-0 MLIP.
Unless otherwise stated, fcc alloys are represented by $3\times3\times3$ conventional supercells containing 108 sites, bcc alloys by $4\times4\times4$ conventional supercells containing 128 sites, and diamond-cubic defect systems by $3\times3\times3$ supercells containing 216 sites.
All isobaric SGC calculations are performed at zero external pressure, $P=0$, appropriate for ambient-pressure solids where atmospheric pressure is negligible relative to characteristic internal elastic stresses.

Regarding property evaluation, chemical SRO is quantified using the Warren-Cowley parameter $\alpha_{ij}=1-P_{ij}/c_j$, where $P_{ij}$ is the probability that a nearest neighbour of species $i$ is species $j$ and $c_j$ is its global concentration. Nearest-neighbour shells are defined using the first minimum of the corresponding radial distribution function. 
Mechanical properties are evaluated following the procedure used in ATLAS \cite{cheng2026atlas}. Generated structures are first fully relaxed using MACE-MPA-0, after which the bulk modulus $B$ is obtained from an isotropic Birch-Murnaghan equation-of-state fit and the shear modulus $G$ from orthorhombic and monoclinic volume-conserving deformations combined using the Voigt-Reuss-Hill average.
For optical-property prediction, a reference dataset of frequency-dependent dielectric responses for 2,048 alloy structures is first generated using DFT calculations. GNNOpt \cite{hung2024universal}, an equivariant GNN using ensemble atomic embeddings, is then trained to predict the optical spectrum from atomic structures, from which the frequency-dependent reflectance and visible-range-averaged reflectance $R_\text{visible}$ are obtained. 
For point defects, JANUS is used only for structural-motif screening, and the resulting candidates are subsequently evaluated individually using DFT for defect formation energies, CTLs and in-gap electronic states as described below. The full details of DFT calculations are shown in ``First-principles DFT calculations''.

Multi-objective inverse design follows the LLM-EA approach introduced in ATLAS \cite{cheng2026atlas,wang2025efficient}. The LLM proposes chemical-potential differences $\Delta\mu$ between chemical elements, for which JANUS generates an ensemble of 512 configurations to evaluate the corresponding equilibrium composition and target properties as ensemble averages.
Evaluations are managed asynchronously across up to 16 concurrent jobs, with new proposals launched as computational slots become available. The resulting $\Delta\mu$, composition and property tuples are accumulated in the search history.
For each proposal, 10 previous evaluations are sampled as LLM context: 3 high-performing candidates ranked using randomly weighted normalized objectives, 4 sampled from the current Pareto set, and 3 randomly sampled dominated candidates for exploration. These examples are then used to condition subsequent chemical-potential proposals.

\subsection*{Grand-canonical sampling and identification of defect motifs}
Defects in diamond and silicon are sampled using JANUS in the fixed-volume grand-canonical ($\mu VT$) ensemble. We use 216-site $3\times3\times3$ diamond-cubic supercells at $T=800$ K, and the simulation-cell volume is fixed at the pristine-host supercell volume throughout. Each lattice site carries a discrete token corresponding to the host atom, one of 15 dopants or a vacancy, together with a continuous displacement $\mathbf{u}_i$ from its ideal lattice position. 
For silicon, the 17-token alphabet contains Si (host), B, Al, Ga, C, Ge, P, As, Sb, S, Ti, Cr, Fe, Co, Ni, Cu and VAC; for diamond, it contains C (host), B, Al, N, P, As, O, S, Si, Ge, Mg, Ti, Cr, Mn, Fe, Ni and VAC. The unnormalized target distribution is
\begin{equation}
\log \pi_1(\mathbf{a}, \mathbf{u})=\beta\left[\sum_i \mu_{a_i}-U(\mathbf{a}, \mathbf{u})-V_{\mathrm{cap}}(\mathbf{a})\right]+\sum_{i: a_i=\mathrm{VAC}} \log \rho_g\left(\mathbf{u}_i\right),
\end{equation}
where $\rho_g$ is a normalized site-local Gaussian distribution assigned to the auxiliary coordinates of vacancy sites to retain a fixed-dimensional representation. Vacancy sites are removed from the MACE interaction graph, such that these auxiliary coordinates do not contribute to the physical energy.
Here, $U$ is evaluated using MACE-MPA-0 with around 9M parameters \cite{batatia2022mace,batatia2025foundation} and $V_{\mathrm{cap}}$ confines sampling to the dilute-defect regime.
At fixed $T$ and $V$, the configurational measure for an occupied site of species $a$ contributes a factor $V/\Lambda_a^3$, where $\Lambda_a=h/\sqrt{2\pi m_a k_{\mathrm B}T}$ is the thermal de Broglie wavelength. Replacing an atom of species $a$ by a vacancy removes one such factor, so each vacancy contributes $-\log\left(\frac{V}{\Lambda_a^3}\right)$ to $\log\pi_1$. 
Because $T$ and $V$ are fixed in the $\mu VT$ defect ensemble, this contribution is constant per vacancy and can therefore be absorbed into an effective vacancy chemical potential,
\begin{equation}
\mu_{\mathrm{VAC}}^{\mathrm{eff}}=\mu_{\mathrm{VAC}}-k_{\mathrm B}T\log\left(\frac{V}{\Lambda_a^3}\right).
\end{equation}
Here, $a$ denotes the atomic species relative to which the vacancy is defined, for example the Si host in silicon or C host in diamond. More generally, if vacancies are exchanged with multiple atomic species, the species dependence through $\Lambda_a$ can equivalently be absorbed into the corresponding atomic chemical potentials. 

Although the number of lattice sites $M$ is fixed, the number of physical atoms fluctuates as $N_{\mathrm{atom}}=M-N_{\mathrm{VAC}}$. Since
\begin{equation}
\sum_a \mu_a N_a+\mu_{\mathrm{VAC}}N_{\mathrm{VAC}}=\mu_{\mathrm{VAC}}M+\sum_a(\mu_a-\mu_{\mathrm{VAC}})N_a,
\end{equation}
the constant first term can be dropped when converting back to Boltzmann density. Thus, treating VAC as a lattice token is equivalent to coupling the physical atomic populations to reservoirs through chemical-potential differences, corresponding to a lattice grand-canonical $\mu VT$ ensemble.

We fix the chemical-potential gauge by setting $\mu_{\mathrm{host}}=0$ (host is either Si for silicon or C for diamond). The chemical potential of each defect species $X$ is initialized independently according to
\begin{equation}
\mu_X=\Delta E_X+k_{\mathrm B}T\log\left(\frac{c_X}{c_{\mathrm{host}}}\right),
\end{equation}
where $\Delta E_X$ is the energy change upon replacing a single host site by species $X$ without structural relaxation. For vacancies, this corresponds to replacing the host atom by the vacancy token. We use $c_X=0.01$ and $c_{\mathrm{host}}=0.99$, corresponding to a nominal 1\% dilute concentration for each independently calibrated defect species. This calibration is used only to place different defect species within an accessible chemical-potential range and is not intended to reproduce experimental equilibrium concentrations.
An unconstrained grand-canonical solid could leave the dilute-defect basin through vacancy aggregation or excessive substitution, therefore we introduce differentiable concentration penalties,
\begin{equation}
V_{\mathrm{cap}}=k_{\mathrm{cap}}\,\mathrm{softplus}(N_{\mathrm{VAC}}-N_{\mathrm{VAC}}^{\max})+k_{\mathrm{cap}}\sum_{X\in\mathcal{A}_{\mathrm{dopant}}}\mathrm{softplus}
\left(N_X-N_X^{\max}\right),
\end{equation}
where $k_{\mathrm{cap}}=8$ eV, $\mathrm{softplus}(x)=\frac{1}{s}\log\left(1+e^{sx}\right)$ and $s=4$, and $\mathcal{A}_{\mathrm{dopant}}$ denotes the set of dopant species. 
We set $N_{\mathrm{VAC}}^{\max}=4$, corresponding to approximately $2\%$ vacancies, and $N_X^{\max}=6$, corresponding to approximately $3\%$ of each active dopant species in the 216-site cell. The chemical potentials therefore control the propensity of individual defects to occur, whereas the concentration caps prevent sampling from escaping into highly defective, alloyed or void-like configurations outside the intended dilute regime.

To amortize sampling across chemical space, each configuration is conditioned on the host together with a randomly selected subset of up to three active defect species (including vacancy). 
For each selected species, the chemical potential is independently sampled from a uniform window $[\mu_X-0.20\,\mathrm{eV},\,\mu_X+0.20\,\mathrm{eV}]$ to further amortize over local variations in defect chemical potential.
All remaining species are suppressed by setting their chemical potentials to $-12$ eV. The active-set sizes, including the host, are sampled from $2$, $3$ and $4$ with probabilities $0.2$, $0.4$ and $0.4$, respectively. After training, 8,000 configurations are generated for each host and mapped onto nearest-neighbour defect graphs. We extract nearest-neighbour pairs and connected three-site motifs; for a triplet A-[B]-C, the bracketed species denotes the central site bonded to both terminal defects.  
For each pair or triplet motif, we compute the occurrence probability $P_{\mathrm{obs}}$ from the generated ensemble and compare it with the corresponding probability $P_{\mathrm{rand}}$ expected from random mixing at the sampled compositions. The enrichment ratio $P_{\mathrm{obs}}/P_{\mathrm{rand}}$ quantifies preferential motif association, with values above unity indicating enrichment relative to random mixing. Enriched motifs are subsequently evaluated using DFT to identify candidates for quantum-engineering applications.

\subsection*{First-principles DFT calculations}
First-principles DFT calculations are performed using GPAW \cite{mortensen2024gpaw} with a plane-wave cutoff of 500 eV. For optical calculations, 2,048 fcc binary and ternary alloy structures containing up to 64 atoms are generated as special quasirandom structures (SQSs) \cite{zunger1990special} from the Cu-Ag-Au-Al-Ni chemical space. An 800 K configuration is generated for each structure by NVT MD simulation and subsequently evaluated using the Perdew-Burke-Ernzerhof (PBE) functional \cite{perdew1996generalized} with a $3\times 3\times 3$ k-point mesh. 
The frequency-dependent dielectric response is calculated within the random-phase approximation (RPA) \cite{yan2011linear}. Optical spectra are evaluated over photon energies of 0-6 eV using a spectral broadening of 0.10 eV, and the visible-range reflectance is averaged over 1.65-3.10 eV.

For evaluation of defects for quantum engineering, selected defect motifs are evaluated in 216-site $3\times3\times3$ diamond-cubic supercells for both silicon and diamond hosts. Each motif is evaluated in charge states $q=+1,0,-1$. Structures are first relaxed using PBE with $\Gamma$-point sampling until the maximum residual force is below 0.03 eV$\cdot\rm{\mathring{A}^{-1}}$. 
To identify low-energy magnetic configurations, multiple total magnetic-moment sectors compatible with the electron count are independently relaxed for each charge state, with higher-spin sectors additionally considered for motifs containing transition metals. The lowest-energy magnetic configuration is retained for subsequent electronic-structure calculations.
Electronic energies and defect states are then recomputed on the PBE-relaxed structures using the fully self-consistent Heyd-Scuseria-Ernzerhof (HSE06) screened hybrid functional \cite{krukau2006influence} to describe localized defect states and their positions relative to the host band edges. The pristine-host energies, valence- and conduction-band edges, and elemental reference chemical potentials entering the defect energetics are consistently evaluated at the HSE06 level. Charged-defect total energies are corrected for interactions between periodic images and the compensating background using the Freysoldt-Neugebauer-Van de Walle correction \cite{freysoldt2014first}, with static dielectric constants of 11.7 and 5.7 used for silicon and diamond, respectively.

Finally, the formation energy of defect $D$ in charge state $q$ is evaluated as
\begin{equation}
E_{\mathrm{f}}[D^q]=E[D^q]-E_{\mathrm{bulk}}-\sum_i n_i\mu_i+q(E_{\mathrm{VBM}}+E_{\mathrm{F}})+E_{\mathrm{corr}}^q,
\end{equation}
where $E[D^q]$ and $E_{\mathrm{bulk}}$ are the defect and pristine-host HSE06 energies, $n_i$ denotes the change in the number of atoms of species $i$, such that $n_i<0$ corresponds to atoms removed from the supercell and $n_i>0$ to atoms added to form the defect, and $\mu_i$ is the corresponding atomic chemical potential, $E_{\mathrm{F}}$ is measured from the aligned VBM, and $E_{\mathrm{corr}}^q$ is the finite-size correction. Thermodynamic CTLs between charge states $q_1$ and $q_2$ are obtained from crossings of the corresponding formation energies,
\begin{equation}
\varepsilon(q_1/q_2)=\frac{E[D^{q_1}]+E_{\mathrm{corr}}^{q_1}-E[D^{q_2}]-E_{\mathrm{corr}}^{q_2}}{q_2-q_1}-E_{\mathrm{VBM}}.
\end{equation}

\section*{Acknowledgments}
The authors thank Juno Nam for helpful discussions. D.B. acknowledges support by funding from a Google PhD fellowship in Machine Learning and ML Foundations. M.S. acknowledges the financial support from Deutsche Forschungsgemeinschaft (DFG) through Grant No. CRC 1114, “Scaling Cascades in Complex Systems” (Project No. 235221301), Project No. B08, “Multiscale Boltzmann Generators”. The Flatiron Institute is a division of the Simons Foundation.

\newpage

\begingroup
\sffamily
\raggedright
\setlength{\parindent}{0pt}

\vspace*{-24pt}

{\bfseries\fontsize{18}{22}\selectfont
JANUS: A Multi-modal Foundation Neural Sampler for Disordered Materials: Supplementary Information
\par}
\vspace{9pt}
{\bfseries\selectfont
Denis Blessing$^{1,2,*,\#}$, Mouyang Cheng$^{1,3,4,*,\#}$, Maximilian Schebek$^{5}$, Jutta Rogal$^{6}$, Mingda Li$^{3,7}$, Carles Domingo-Enrich$^{1,\dagger}$, and Yuanqi Du$^{1,\dagger}$
\par}
\vspace{12pt}
{\fontsize{9.5}{11.5}\selectfont
$^{1}$Microsoft Research New England, Cambridge, MA 02142, USA\\
$^{2}$Karlsruhe Institute of Technology, 76131 Karlsruhe, Germany\\
$^{3}$Center for Computational Science and Engineering, MIT, Cambridge, MA 02139, USA\\
$^{4}$Department of Materials Science and Engineering, MIT, Cambridge, MA 02139, USA\\
$^{5}$Department of Physics, Freie Universit\"{a}t Berlin, 14195 Berlin, Germany\\
$^{6}$Initiative for Computational Catalysis, Flatiron Institute, New York, NY 10010, USA\\
$^{7}$Department of Nuclear Science and Engineering, MIT, Cambridge, MA 02139, USA\\
$^{*}$These authors contributed equally.\\
$^{\#}$Work done during internship at Microsoft Research.\\
$^\dagger$Correspondence. Email: carlesd@microsoft.com, yuanqidu@microsoft.com
\par}
\endgroup

\addtocontents{toc}{\protect\setcounter{tocdepth}{2}}
\tableofcontents
\setcounter{figure}{0}
\setcounter{table}{0}
\setcounter{equation}{0}
\renewcommand{\thefigure}{S\arabic{figure}}
\renewcommand{\thetable}{S\arabic{table}}
\renewcommand{\theequation}{S\arabic{equation}}
\renewcommand{\theHfigure}{Supp\arabic{figure}}
\renewcommand{\theHtable}{Supp\arabic{table}}
\renewcommand{\theHequation}{Supp\arabic{equation}}

\newcommand{\enghist}[3]{%
  \begin{scope}[shift={(#1,#2)}]
    \pgfmathsetmacro{\sig}{0.28-0.18*(#3)}%
    \pgfmathsetmacro{\amp}{0.20+0.28*(#3)}%
    \pgfmathtruncatemacro{\mixp}{100-100*(#3)}%
    \draw[rounded corners=1pt,draw=black!45,line width=0.4pt,fill=white] (-0.40,-0.32) rectangle (0.40,0.32);
    \fill[red!\mixp!blue,opacity=0.20]
      plot[domain=-0.34:0.34,samples=34,smooth] (\x,{-0.24+\amp*exp(-(\x*\x)/(2*\sig*\sig))})
      -- (0.34,-0.24) -- (-0.34,-0.24) -- cycle;
    \draw[red!\mixp!blue,line width=0.8pt]
      plot[domain=-0.34:0.34,samples=34,smooth] (\x,{-0.24+\amp*exp(-(\x*\x)/(2*\sig*\sig))});
  \end{scope}%
}
\newtheorem{proposition}{Proposition}

\section{Method details}

\begin{figure}[!h]
 \centering
 \includegraphics[width=0.95\textwidth]{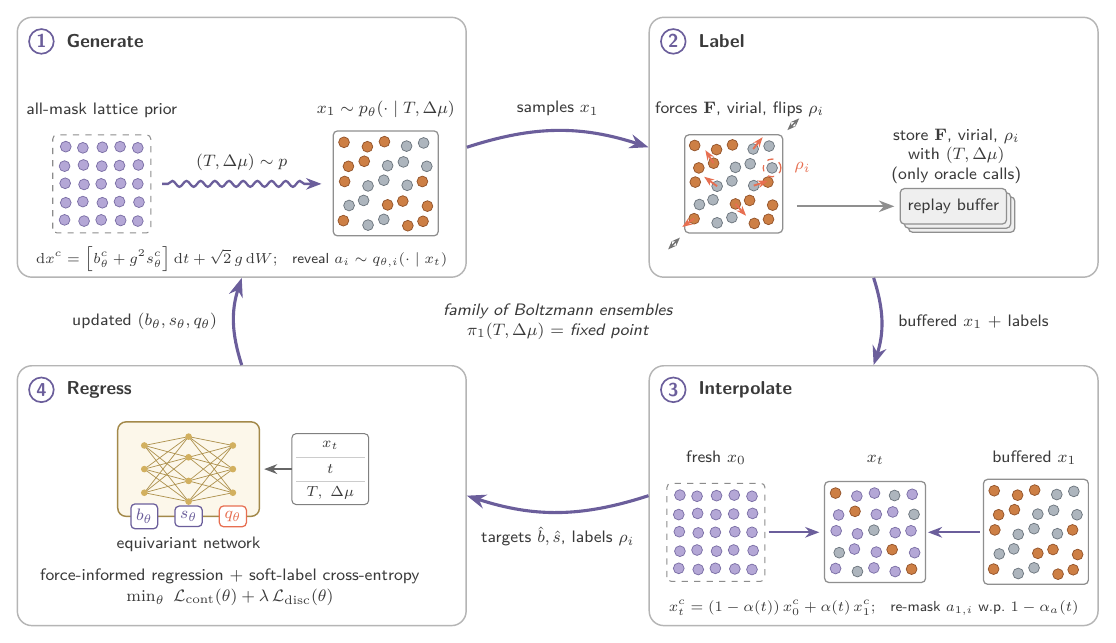}
 \caption{\textbf{Self-consistent fixed-point training of \name{}.} At each round, the current model generates terminal configurations by integrating the coupled dynamics---the discretized stochastic differential equation \eqref{eq:em} for displacements and log-volume, monotone unmasking \eqref{eq:reveal} for the species---from the all-mask lattice prior, under conditions $(T,\Delta\mu)$ drawn freshly per chain (\emph{generate}); the interatomic potential labels each terminal with the forces, the virial, and the heat-bath conditionals $\rho_i$ of Eq.~\eqref{eq:flip-score}, which are stored in a replay buffer together with the condition (\emph{label}); intermediate states $(\mathbf a_t,\mathbf x_t)$ are formed by interpolating the continuous channels between fresh prior draws and buffered terminals while re-masking the buffered species site-wise (\emph{interpolate}); and the network heads---the velocities and scores $(b_\theta,s_\theta)$ of the continuous channels and the species posterior $q_\theta$---are updated by force-informed regression and soft-label cross-entropy (\emph{regress}). The family of Boltzmann ensembles $\pi_1 (T,\Delta\mu)$ is a fixed point of this cycle.}
 \label{fig:m-fixedpoint}
\end{figure}

\subsection{Algorithmic description of training and inference}
\label{si:algorithms}

Algorithms~\ref{alg:train} and~\ref{alg:infer} describe the amortized training loop of the Methods section and the free-energy inference with forward--backward path weights. Two structural properties deserve emphasis. First, every evaluation of the interatomic potential happens in the \emph{label} step: each terminal configuration is stored together with its labels---the continuous scores \eqref{eq:cont-targets}, i.e.\ forces and virial (the derivative of the energy with respect to the isotropic log-volume), and the heat-bath conditionals \eqref{eq:flip-score} of \emph{all} $N$ sites, obtained from a single batched evaluation---so the inner gradient steps replay this oracle information many times, and the number of energy and force evaluations is set by the generation cadence alone, independently of the number of gradient steps. Second, both loops are conditional throughout: every chain carries its own $(T,\Delta\mu)$ which is stored in the buffer alongside the labels, and a replayed terminal is interpolated, targeted and regressed at exactly its stored condition, so a single network amortizes the whole conditioning window. The diffusion strength is chosen per channel and scales with temperature as $g_c(t;T)=g_c(t)\,(T/T_{\mathrm{ref}})^{1/2}$, matching the scaling of the thermal fluctuations that the prior widths \eqref{eq:prior-u} already follow.

\begin{algorithm}[t]
\caption{\name{} amortized fixed-point training}
\label{alg:train}
\begin{algorithmic}[1]
\State \textbf{input:} interatomic potential $U$ (forces, virial, single-site substitution energies); pressure $P$; condition distribution $p(T,\Delta\mu)$; schedules $\alpha(t),\alpha_a(t)$; diffusion strengths $g_c(t)$; prior parameters of Eqs.~\eqref{eq:prior-w}--\eqref{eq:prior-u}; loss weight $\lambda$
\State initialize the network $\theta$ with heads $(b^u_\theta,s^u_\theta,b^v_\theta,s^v_\theta,q_\theta)$ (scalar heads at zero)
\State fill the replay buffer $\mathcal B$ with conditioned prior draws
\For{a fixed number of outer rounds}
  \State \textbf{// generate}
  \State draw per-chain conditions $(T,\Delta\mu)\sim p$
  \State draw $x_0=(\mathbf a_0,\mathbf u_0,v_0)$: all-mask species, $(\mathbf u_0,v_0)$ from Eqs.~\eqref{eq:prior-w}--\eqref{eq:prior-u}
  \State integrate Eqs.~\eqref{eq:em}--\eqref{eq:reveal} for $M$ steps under $(T,\Delta\mu)$ to obtain terminals $x_1=(\mathbf a_1,\mathbf u_1,v_1)$
  \State \textbf{// label}
  \State evaluate the continuous scores \eqref{eq:cont-targets} and the heat-bath conditionals $\rho_i$ of Eq.~\eqref{eq:flip-score} for all sites 
  \State push $\big(x_1,\,\text{labels},\,(T,\Delta\mu)\big)$ onto $\mathcal B$
  \For{a fixed number of inner steps}
    \State draw $\big(x_1,\text{labels},(T,\Delta\mu)\big)\sim\mathcal B$; fresh $(\mathbf u_0,v_0)$ at the stored condition; $t\sim\mathcal U[0,1]$
    \State \textbf{// interpolate}
    \State $x^c_t\gets(1-\alpha(t))\,x^c_0+\alpha(t)\,x^c_1$ for $c\in\{u,v\}$;\quad mask each site of $\mathbf a_1$ independently w.p.\ $1-\alpha_a(t)$
    \State $\hat b^{\,c}\gets\dot\alpha(t)\,(x^c_1-x^c_0)$;\quad $\hat s^{\,c}\gets$ combination of stored target score and prior score, Eqs.~\eqref{eq:gtsi}--\eqref{eq:gtsi-weight}
    \State $\ell_i\gets\rho_i$ of Eq.~\eqref{eq:flip-score} at the masked sites
    \State \textbf{// regress}
    \State take a gradient step on $\mathcal L_{\mathrm{cont}}+\lambda\,\mathcal L_{\mathrm{disc}}$ [Eqs.~\eqref{eq:loss-cont} and \eqref{eq:loss-disc}] at the stored condition
  \EndFor
\EndFor
\State \textbf{return} $\theta$
\end{algorithmic}
\end{algorithm}

At inference (Algorithm~\ref{alg:infer}), one rollout produces a sample and its log-weight simultaneously: the continuous channels accumulate the forward--backward Gaussian log-ratio \eqref{eq:fbrnd-uw} step by step, the species channel accumulates the exact configuration-space log-probability \eqref{eq:masked-qs} at the reveal moments, and the target enters once, at the terminal, through $\log\pi_1(x_M)$ in Eq.~\eqref{eq:fbrnd}.

\begin{algorithm}[t]
\caption{\name{} inference with forward--backward path weights}
\label{alg:infer}
\begin{algorithmic}[1]
\State \textbf{input:} trained network $\theta$; condition $(T,\Delta\mu)$ at pressure $P$; grid $t_n=nh$ with $h=1/M$; number of trajectories $n_{\mathrm{traj}}$
\For{$j=1,\dots,n_{\mathrm{traj}}$}
  \State draw $\mathbf a_0=(\texttt{M},\dots,\texttt{M})$ and $(\mathbf u_0,v_0)\sim\pi_0(\cdot\mid T,\Delta\mu)$
  \State $\log W\gets-\log\pi_0(\mathbf u_0,v_0)$;\quad $\log q_{\mathrm{disc}}\gets0$
  \For{$n=0,\dots,M-1$}
    \State evaluate the network once at $(x_n,t_n)$: fields $b^c_\theta,s^c_\theta$ and posteriors $q_{\theta,i}$
    \State advance $(\mathbf u,v)$ by the Euler--Maruyama step \eqref{eq:em}
    \State reveal each still-masked site w.p.\ $p_n$ of Eq.~\eqref{eq:reveal} (with $p_{M-1}=1$); a revealed site $i$ draws $a_{1,i}\sim q_{\theta,i}(\cdot\mid x_n)$
    \State $\log q_{\mathrm{disc}}\gets\log q_{\mathrm{disc}}+\textstyle\sum_{i\,\mathrm{revealed}}\log q_{\theta,i}(a_{1,i}\mid x_n)$
    \State $\log W\gets\log W+\Delta^{uv}_n$ of Eq.~\eqref{eq:fbrnd-uw}, with the backward drift of Eq.~\eqref{eq:bwd-sde} evaluated at $(x_{n+1},t_{n+1})$
  \EndFor
  \State $\log W\gets\log W+\log\pi_1(x_M)-\log q_{\mathrm{disc}}$
  \State store $\big(x_M^{(j)},\log W^{(j)}\big)$
\EndFor
\State \textbf{semi-grand potential:} $\log\widehat\Xi=\operatorname*{logsumexp}_j\log W^{(j)}-\log n_{\mathrm{traj}}$;\quad $\widehat\Phi=-k_{\mathrm B}T\log\widehat\Xi$ \Comment{Eq.~\eqref{eq:fbrnd-Z}}
\State \textbf{observables:} $\bar w_j=e^{\log W^{(j)}}\big/\sum_{j'}e^{\log W^{(j')}}$;\quad $\langle O\rangle=\sum_j\bar w_j\,O\big(x_M^{(j)}\big)$
\State \textbf{diagnostic:} $\mathrm{ESS}=\big(\sum_j e^{\log W^{(j)}}\big)^2\big/\sum_j e^{2\log W^{(j)}}$
\State \textbf{return} samples $\{x_M^{(j)}\}$, weights $\{\bar w_j\}$, and estimates $(\log\widehat\Xi,\widehat\Phi,\langle O\rangle,\mathrm{ESS})$
\end{algorithmic}
\end{algorithm}

\subsection{The discrete masked sampler and its Boltzmann fixed point}
\label{si:discrete-fixed-point}

This section formalizes the discrete part of the fixed-point iteration of the Methods {section} and shows that the Boltzmann ensemble is a fixed point of it. Throughout we freeze the continuous channels and the thermodynamic conditions and write $\pi_1(\mathbf a)$ for the target's conditional law of the species vector $\mathbf a\in\mathcal A^N$; for the binary alloy $\mathcal A=\{0,1\}$, but nothing below depends on the alphabet size, so vacancies and multi-component systems are covered verbatim. The heat-bath conditionals of the target are
\begin{equation}
\rho_i(b\mid \mathbf a_{-i})=\pi_1\big(a_i=b\,\big|\,\mathbf a_{-i}\big),
\end{equation}
cf.\ Eq.~\eqref{eq:flip-score} of the Methods {section}. A \emph{partially revealed state} is a vector $x\in(\mathcal A\cup\{\texttt{M}\})^N$; we write $\mathcal R(x)$ and $\mathcal M(x)$ for its revealed and masked sites and say that a terminal {state} $\mathbf a$ is \emph{consistent} with $x$, written $\mathbf a\triangleright x$, if $a_i=x_i$ for all $i\in R(x)$.

\paragraph{The sampler.} A model is a family of heads $q=\{q_i(\cdot\mid x)\}$, one categorical law per masked site of every partially revealed state. The reveal process draws i.i.d.\ reveal times $\tau_i$ with $\mathbb P(\tau_i\le t)=\alpha_a(t)$ and, at its reveal time, site $i$ draws its symbol from $q_i(\cdot\mid x)$ evaluated at the then-current state; we analyze the exact sequential process (one reveal at a time), to which the block integrator of the Methods {section} converges as the number of reveals per step approaches one. Since the $\tau_i$ are i.i.d., the reveal order is a uniformly random permutation $\sigma$ of the sites, independent of everything else, and the terminal law of the rollout is
\begin{equation}
\pi_q(\mathbf a)=\mathbb E_\sigma\Bigg[\prod_{k=1}^{N}q_{\sigma(k)}\big(a_{\sigma(k)}\,\big|\,x^{\sigma,\mathbf a}_{k-1}\big)\Bigg],
\label{eq:si-aoar}
\end{equation}
where $x^{\sigma,\mathbf a}_{k-1}$ is the state in which sites $\sigma(1),\dots,\sigma(k-1)$ show their values under $\mathbf a$ and all others are masked. Equation \eqref{eq:si-aoar} is the any-order autoregressive representation of the masked sampler \cite{ou2024your}.

\paragraph{The training map.} One idealized round of the iteration consists of the two maps
\begin{equation}
q\;\xrightarrow{\ \Lambda\ }\;\pi_q\;\xrightarrow{\ \Psi\ }\;q',
\end{equation}
{Here, “idealized” means that the model can represent any collection of conditional distributions and that each training round finds the exact population-loss minimizer. We further assume that training uses only terminals generated by the current model, rather than the mixture of current and earlier rollout distributions retained by the practical replay buffer.}
A model $q$ is a \emph{fixed point} if $\Psi(\Lambda(q))=q$ on all states reachable under its own rollout.

\begin{proposition}[Population minimizer]
\label{prop:si-minimizer}
Fix a buffer law $\pi$. For every partially revealed state $x$ with $\pi$-positive probability and every $i\in\mathcal M(x)$, the unique minimizer of the population loss \eqref{eq:loss-disc} is
\begin{equation}
q'_i(b\mid x)=\mathbb E_{\mathbf a\sim\pi}\big[\rho_i(b\mid\mathbf a_{-i})\,\big|\,\mathbf a\triangleright x\big].
\label{eq:si-minimizer}
\end{equation}
\end{proposition}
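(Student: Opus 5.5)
The plan is to rewrite the population loss \eqref{eq:loss-disc} as a sum of independent per-state, per-site cross-entropy problems. Each of these is then minimized by Gibbs' inequality.

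First, I make the sampling of the loss explicit. A terminal $\mathbf a\sim\pi$ is drawn, a time $t$ is drawn, and $x=\mathbf a_t$ is formed by masking each site independently with probability $1-\alpha_a(t)$. Because the mask pattern is chosen independently of the symbols, the joint law factorizes as
\[
P(\mathbf a,x)=\pi(\mathbf a)\,\kappa(\mathcal M(x))\,\mathbf 1\{\mathbf a\triangleright x\},
\]
where $\kappa$ is the probability of the mask set after integrating over $t$. The loss is then
\[
-\sum_x \sum_{i\in\mathcal M(x)} \sum_{\mathbf a} P(\mathbf a,x)\sum_b \rho_i(b\mid\mathbf a_{-i})\log q_i(b\mid x).
\]
The head $q_i(\cdot\mid x)$ is a free categorical law for each pair $(x,i)$, by the idealized-expressivity assumption, so the minimization decouples over pairs.

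Second, I fix $x$ with $P(x)>0$ and $i\in\mathcal M(x)$. Dividing by $P(x)$, the term to minimize is
\[
-\sum_b \bar\ell_i(b\mid x)\log q_i(b\mid x),\qquad \bar\ell_i(b\mid x)=\mathbb E_{\mathbf a\sim P(\cdot\mid x)}[\rho_i(b\mid\mathbf a_{-i})].
\]
The factorization above shows that $P(\mathbf a\mid x)=\pi(\mathbf a\mid\mathbf a\triangleright x)$, because the mask factor $\kappa$ cancels. Hence $\bar\ell_i$ is exactly the right-hand side of \eqref{eq:si-minimizer}. It is a probability vector, since each $\rho_i(\cdot\mid\mathbf a_{-i})$ is one.

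Third, I write the per-pair objective as $H(\bar\ell_i)+\mathrm{KL}(\bar\ell_i\,\|\,q_i)$. By Gibbs' inequality, this is uniquely minimized at $q_i=\bar\ell_i$. Uniqueness holds on every symbol $b$ with $\bar\ell_i(b)>0$. The Boltzmann conditionals \eqref{eq:flip-score} are strictly positive, so $\bar\ell_i$ has full support and the minimizer is unique on all of $\mathcal A$.

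The main subtlety is the first step. One must check that conditioning on the masked state $x$ introduces no dependence on the hidden symbols beyond consistency. This is where the independence of the masking from $\mathbf a$ and from $t$ enters, and where the restriction to $\pi$-positive states is needed. The step can be carried out by summing over $t$ explicitly with the product Bernoulli mask law.
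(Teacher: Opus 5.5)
Your proof is correct and follows essentially the same route as the paper. The soft-label cross-entropy at each masked state is minimized uniquely by the conditional mean of the label, and because the mask pattern is drawn independently of the symbols, conditioning on $x$ reduces to conditioning on $\{\mathbf a\triangleright x\}$; you simply spell this out via the factorized joint law and the decomposition $H(\bar\ell_i)+\mathrm{KL}(\bar\ell_i\,\|\,q_i)$. One minor remark: the full-support argument is not needed for uniqueness, since $\mathrm{KL}(\bar\ell_i\,\|\,q_i)=0$ already forces $q_i=\bar\ell_i$ on all of $\mathcal A$ once $q_i$ is normalized.
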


\noindent\emph{Proof.} The cross-entropy $-\sum_b \ell(b)\log q(b)$, averaged over the conditional law of the label $\ell$ given the input $x$, is minimized over categorical $q$ uniquely by the conditional mean $q=\mathbb E[\ell\mid x]$. Because every site is masked independently with a probability that does not depend on its symbol, the mask pattern is independent of $\mathbf a$; conditioning on the observed state $x$ is therefore equivalent to conditioning on the event $\{\mathbf a\triangleright x\}$, which gives \eqref{eq:si-minimizer}. \hfill$\square$

\begin{proposition}[The Boltzmann ensemble is a fixed point]
\label{prop:si-fixed-point}
Let $q^{\pi_1}$ denote the exact denoising posteriors of the target, $q^{\pi_1}_i(b\mid x)=\pi_1\big(a_i=b\mid \mathbf a_{\mathcal R(x)}=x_{\mathcal R(x)}\big)$, {where $a_{\mathcal R(x)}$, $x_{\mathcal R(x)}$ denote the sites of $a$, $x$ where $x$ is unmasked}. Then
(i) $\Psi(\pi_1)=q^{\pi_1}$, and (ii) $\Lambda(q^{\pi_1})=\pi_1$. Consequently $q^{\pi_1}$ is a fixed point of the iteration, and its rollout samples the Boltzmann ensemble exactly.
\end{proposition}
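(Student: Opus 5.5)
Both parts are direct computations from the definitions: (i) uses Proposition~\ref{prop:si-minimizer} with buffer law $\pi=\pi_1$, and (ii) is the chain rule for the any-order autoregressive representation \eqref{eq:si-aoar}.

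\emph{Part (i).} By Proposition~\ref{prop:si-minimizer} with $\pi=\pi_1$, for every $x$ of positive probability and every $i\in\mathcal M(x)$,
\[
\Psi(\pi_1)_i(b\mid x)=\mathbb E_{\mathbf a\sim\pi_1}\big[\rho_i(b\mid\mathbf a_{-i})\,\big|\,\mathbf a\triangleright x\big].
\]
The key observation is that $i\notin\mathcal R(x)$, so the event $\{\mathbf a\triangleright x\}$ depends only on $\mathbf a_{\mathcal R(x)}\subseteq\mathbf a_{-i}$. Since $\rho_i(b\mid\mathbf a_{-i})=\pi_1(a_i=b\mid\mathbf a_{-i})$, the tower property over the nested $\sigma$-algebras $\sigma(\mathbf a_{\mathcal R(x)})\subseteq\sigma(\mathbf a_{-i})$ gives
\[
\mathbb E\big[\pi_1(a_i=b\mid\mathbf a_{-i})\,\big|\,\mathbf a_{\mathcal R(x)}=x_{\mathcal R(x)}\big]=\pi_1\big(a_i=b\mid\mathbf a_{\mathcal R(x)}=x_{\mathcal R(x)}\big)=q^{\pi_1}_i(b\mid x).
\]
Positivity is automatic: because the mask pattern is independent of $\mathbf a$, $x$ has positive probability exactly when $\pi_1(\mathbf a_{\mathcal R(x)}=x_{\mathcal R(x)})>0$, so all conditionals are well defined. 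This step is where the label's independence of the stored $a_{1,i}$ matters; it is the only real content, and it reduces to one line once the $\sigma$-algebras are correctly nested.

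\emph{Part (ii).} Insert $q^{\pi_1}$ into \eqref{eq:si-aoar}. For a fixed permutation $\sigma$, the state $x^{\sigma,\mathbf a}_{k-1}$ reveals exactly the sites $\sigma(1),\dots,\sigma(k-1)$, so
\[
q^{\pi_1}_{\sigma(k)}\big(a_{\sigma(k)}\mid x^{\sigma,\mathbf a}_{k-1}\big)=\pi_1\big(a_{\sigma(k)}\mid a_{\sigma(1)},\dots,a_{\sigma(k-1)}\big).
\]
The product over $k$ telescopes by the chain rule to $\pi_1(\mathbf a)$ for every $\sigma$, so the average over $\sigma$ also equals $\pi_1(\mathbf a)$. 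Hence $\Lambda(q^{\pi_1})=\pi_1$.

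\emph{Handling zero-probability prefixes.} Some prefixes may have zero probability under $\pi_1$, where $q^{\pi_1}$ is undefined. I would treat these by noting that if $\pi_1(\mathbf a)>0$ then every prefix of $\mathbf a$ has positive marginal probability, so each factor is well defined. If $\pi_1(\mathbf a)=0$, the product vanishes at the first zero-probability factor, whatever convention is chosen for $q^{\pi_1}$ on unreachable states.

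\emph{Conclusion.} Combining the two parts, $\Psi(\Lambda(q^{\pi_1}))=\Psi(\pi_1)=q^{\pi_1}$ on all states reachable under the rollout of $q^{\pi_1}$. These are exactly the states with $\pi_1$-positive revealed prefix, which is the domain covered in (i). So $q^{\pi_1}$ is a fixed point, and by (ii) its rollout samples $\pi_1$ exactly.
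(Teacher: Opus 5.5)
Your proof is correct and follows the paper's argument. Part (i) applies the population-minimizer proposition at $\pi=\pi_1$ and then the tower property; the paper writes that step as an explicit sum over $\mathbf a_{\mathcal M(x)\setminus i}$ rather than via nested $\sigma$-algebras. Part (ii) is the chain rule for every reveal order, and your extra bookkeeping on positivity, zero-probability prefixes and reachable states is a harmless refinement that the paper leaves implicit.
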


\noindent\emph{Proof.} (i) Apply Proposition~\ref{prop:si-minimizer} with $\pi=\pi_1$. Since $\rho_i(b\mid\mathbf a_{-i})$ does not depend on $a_i$, the tower property of conditional expectation gives
\begin{equation} \label{eq:si-fixed-point}
\mathbb E_{\pi_1}\big[\rho_i(b\mid\mathbf a_{-i})\,\big|\,\mathbf a\triangleright x\big]
=\sum_{\mathbf a_{\mathcal M(x)\setminus i}}\pi_1\big(\mathbf a_{\mathcal M(x)\setminus i}\,\big|\,\mathbf a_{\mathcal R(x)}=x_{\mathcal R(x)}\big)\,
\pi_1\big(a_i=b\,\big|\,\mathbf a_{-i}\big)
=\pi_1\big(a_i=b\,\big|\,\mathbf a_{\mathcal R(x)}=x_{\mathcal R(x)}\big),
\end{equation}
which is $q^{\pi_1}_i(b\mid x)$. {In the first equality of \eqref{eq:si-fixed-point} we rewrote the conditional expectation as a summation and we leveraged that $\rho_i(b\mid\mathbf a_{-i}) = \pi_1\big(a_i=b\,\big|\,\mathbf a_{-i}\big)$ by definition.} (ii) Conditional on any reveal order $\sigma$, the rollout with heads $q^{\pi_1}$ draws $a_{\sigma(1)},\dots,a_{\sigma(N)}$ sequentially from the exact conditionals of $\pi_1$; by the chain rule the resulting joint law is $\pi_1$ for \emph{every} $\sigma$, hence also after averaging over $\sigma$ in \eqref{eq:si-aoar}. \hfill$\square$

\subsection{Generalized target score matching}
\label{si:gtsi}
The score target $\hat s^{\,c}$ of Eq.~\eqref{eq:cont-targets} is the target-endpoint form of the score identity \eqref{eq:tsi}. Its coefficient $1/\alpha(t)$ diverges as $t\to0$, i.e., precisely where no estimate is needed, since $p_0=\pi_{0,c}$ is known in closed form. Following prior work \cite{de2024target, blessing2026bridge}, we therefore regress on a convex combination of the score identities anchored at the two endpoints of the interpolant,
\begin{equation}
s^c(\mathbf a,\mathbf x,t)=\mathbb E\Big[\tfrac{c(t)}{\alpha(t)}\,\nabla_{x_1^c}\log \pi_1(\mathbf a_1;\mathbf x_1) + \tfrac{1-c(t)}{1-\alpha(t)}\,\nabla_{x_0^c}\log\pi_{0,c}(\mathbf{x}_0)\,\Big|\,(\mathbf a_t,\mathbf x_t)=(\mathbf a,\mathbf x)\Big],
\label{eq:gtsi}
\end{equation}
which holds for \emph{every} weight function $c(t)$: conditionally on $(\mathbf a_t,\mathbf x_t)$, both inner terms are unbiased estimates of the marginal score: the first by the target score identity \eqref{eq:tsi}, the second by its mirror image with the roles of the endpoints exchanged, in which the integration by parts is carried out against the (Gaussian) prior rather than the target. The choice of $c(t)$ therefore affects only the variance of the per-sample regression target, which is dominated by the squared coefficients $c(t)^2/\alpha(t)^2$ and $(1-c(t))^2/(1-\alpha(t))^2$. Balancing the two gives
\begin{equation}
    c(t)=\frac{\alpha(t)^2}{\alpha(t)^2+\big(1-\alpha(t)\big)^2},
\label{eq:gtsi-weight}
\end{equation}
the variance-minimizing weight when the two endpoint scores carry comparable conditional second moments. With the linear schedule $\alpha(t)=t$ used in this work, $c(t)=t^2/\big(t^2+(1-t)^2\big)$: the regression target moves smoothly from the exact prior score at $t=0$ to the pure force label \eqref{eq:cont-targets} at $t=1$, both coefficients remain bounded on all of $[0,1]$, and no clamping or time cutoff is required. The prior scores are closed-form throughout, $\nabla_{\mathbf u_0}\log\pi_{0,u}=-\mathbf u_0/\sigma_u(T)^2$ and $\partial_{v_0}\log\pi_{0,v}=-(v_0-\bar v)/\sigma_v^2$ from Eqs.~\eqref{eq:prior-w}--\eqref{eq:prior-u}.

\subsection{Reference data generation}
\label{si:ref-data}

Two Monte Carlo references cover the alloy systems, matching the two model routes of the Methods section: First, for Cu--Ni, which remains miscible across the studied window, the composition responds smoothly to the exchange potential and chains sampling the semi-grand target \eqref{eq:target} directly converge with local moves alone; the reference is a grid of independent $(T,\Delta\mu)$ states pooled by multistate reweighting. Second, for Cu--Ag and Ni--Cr the miscibility gap makes that route unreliable as semi-grand chains are hysteretic at coexistence. The reference free energies and binodals for these systems are therefore computed with the canonical composition ladder---fixed-composition ensembles at every rung $n=0,\dots,N$, connected by the substitution-edge estimator of Section~\ref{si:ladder-bar}---which measures $\log Z_n$ rung by rung and never has to cross the barrier at all.

\paragraph{Local moves.} Both reference samplers run the same Metropolis kernel, batched across all states and walkers; only the species move differs. One sweep applies three moves. (i)~A collective displacement move $\mathbf u'=\mathbf u+\sigma_u\boldsymbol\xi$, $\boldsymbol\xi\sim\mathcal N(0,\mathbf I)$; the proposal is symmetric and leaves $(v,\mathbf a)$ untouched, so $\log A=-\beta\,\Delta U$. (ii)~A log-volume move $v'=v+\sigma_v\xi$, which does carry a Jacobian because $v$ changes, $\log A=-\beta\big(\Delta U+P\,\Delta V\big)+N\,\Delta v$; proposing in $\log V$ rather than $V$ turns the $V^N$ factor of the target into the harmless additive $N\Delta v$ and lets a fixed-width Gaussian proposal work across the whole grid. (iii)~The species move: the semi-grand chains attempt $n_{\mathrm{flip}}$ single-site species flips at uniformly random sites, accepted with $\log A=\beta\big(\Delta\mu\,\Delta N_{\mathrm B}-\Delta U_i\big)$ while the ladder chains attempt $n_{\mathrm{exch}}$ pair exchanges, in which a uniformly drawn A--B site pair trades types with $\log A=-\beta\,\Delta U$, rearranging the configuration at exactly fixed $N_{\mathrm B}$. One sweep costs $2+n_{\mathrm{flip}}$ (respectively $2+n_{\mathrm{exch}}$) batched potential evaluations. The displacement and volume moves act on the whole configuration at once, so a healthy step size scales as $\sigma\sim1/\sqrt{N\beta}$ and cannot be a single constant across several hundred kelvin: steps are initialized per state as $\sigma_k\propto\sqrt{\beta_{\min}/\beta_k}$ and adapted toward a target acceptance of ${\approx}0.3$ during burn-in only. At the first production sweep the steps freeze and all counters reset, so recorded samples come from a fixed, detailed-balanced kernel.

\paragraph{The semi-grand grid (Cu--Ni).} A reference run is a two-dimensional grid of $(T,\Delta\mu)$ states, each with two walkers initialized from the two pure phases (all-Ni and all-Cu), whose agreement after burn-in is the convergence criterion: the two chains approach the stationary composition from opposite sides. The states are statistically independent and no configurations are exchanged between them, so a grid shards freely across devices, and the states are connected only in post-processing, by the multistate reweighting below.

\paragraph{Multistate reweighting.}
Samples from $K$ semi-grand states $\{(T_k,\Delta\mu_k,P_k)\}$ are combined with the multistate Bennett acceptance ratio (MBAR) \cite{shirts2008statistically}. Each pooled configuration $x=(\mathbf a,\mathbf u,v)$ is scored under every state through its reduced potential,
\begin{equation}
u_k(x)=\frac{U(x)+P_k V(x)-\Delta\mu_k\,N_{\mathrm B}(x)}{k_{\mathrm B}T_k},
\label{eq:reduced}
\end{equation}
which is $-\log\pi_1(T_k,\Delta\mu_k)$ up to the Jacobian term $Nv$; the latter is state-independent and cancels from all weight ratios. With $N_k$ samples drawn from state $k$, the reduced free energies $\hat f_k=-\log\Xi_k$ (up to one common constant) solve the self-consistent MBAR equations
\begin{equation}
\hat f_k=-\log\sum_{x}\frac{e^{-u_k(x)}}{\sum_{k'}N_{k'}\,e^{\hat f_{k'}-u_{k'}(x)}},
\label{eq:mbar}
\end{equation}
where the outer sum runs over the pooled samples of all states. Since MBAR assumes decorrelated samples, the pooled chains are thinned by their measured autocorrelation.

\paragraph{The composition ladder (Cu--Ag and Ni--Cr).} We use two walkers per $(n,T)$ state, one initialized as a random solid solution, the other as a phase-separated slab, and use the residual spread between them as convergence criterion. Within each rung, adjacent temperatures exchange full configurations in disjoint pairs of alternating parity (the deterministic even--odd scheme \cite{syed2022nonreversible}). Both exchange partners share $N_{\mathrm B}$, so the composition term of the target cancels from the acceptance,
\begin{equation}
\log A_{\mathrm{swap}}=(\beta_k-\beta_l)\big[\mathcal H(x_k)-\mathcal H(x_l)\big],\qquad \mathcal H=U+PV,
\label{eq:si-swap}
\end{equation}
and the volume Jacobian $Nv$ cancels identically because its coefficient is the same $N$ in every replica.

\subsection{Free energies along the composition ladder}
\label{si:ladder-bar}

Here, we provide details on the computation of the composition-resolved free energy $G(n)$. Additionally, we present a different estimator that replaces the direct path-weight estimation of $Z_n$ outlined in the Methods section and only measures \emph{ratios} of neighboring rungs, in which the dominant share of the path-weight variance (the absolute free energy of the rung ensemble) cancels. It is the estimator used for all reported ladder results.

\paragraph{Substitution identity.}
For a configuration $\mathbf a$ on rung $n$, and an A site $i$, define the
forward substitution energy
\begin{equation}
\Delta U_i^{\mathrm{A}\to\mathrm{B}}=U(\mathbf a^{(i\to\mathrm B)};\mathbf u,v)-U(\mathbf a;\mathbf u,v).
\end{equation}
Similarly, for a configuration on rung $n+1$ and a B site $j$, define
\begin{equation}
\Delta U_j^{\mathrm{B}\to\mathrm{A}}=U(\mathbf a^{(j\to\mathrm A)};\mathbf u,v)-U(\mathbf a;\mathbf u,v).
\end{equation}
In both substitutions, the displacements and volume are held fixed. Every configuration on rung $n+1$ has exactly $n+1$ preimages obtained by choosing one of its B sites and retyping it as A. Conversely, every configuration on rung $n$ has $N-n$ possible A-to-B substitutions.
Summing over these configuration--site pairs and regrouping by the substituted configuration gives
\begin{equation}
\begin{aligned}
\frac{Z_{n+1}}{Z_n}&=\frac{N-n}{n+1}\left\langle\frac{1}{N-n}\sum_{i:a_i=\mathrm A}e^{-\beta\Delta U_i^{\mathrm{A}\to\mathrm{B}}}\right\rangle_n =\frac{N-n}{n+1}\left\langle\frac{1}{n+1}\sum_{j:a_j=\mathrm B}e^{-\beta\Delta U_j^{\mathrm{B}\to\mathrm{A}}}\right\rangle_{n+1}^{-1}.
\end{aligned}
\label{eq:ladder}
\end{equation}
\noindent\emph{Proof sketch.}
Let $\Omega_n=\{\mathbf a:N_{\mathrm B}(\mathbf a)=n\}$
denote the set of species configurations on rung $n$, and write the
fixed-composition partition function as
\[
Z_n
=
\sum_{\mathbf a\in\Omega_n}
\int d\lambda(\mathbf u,v)\,
e^{-\beta[U(\mathbf a;\mathbf u,v)+PV(v)]},
\]
where $d\lambda(\mathbf u,v)$ denotes the joint continuous measure. The measure is the same on
neighboring rungs because an A--B substitution does not change the total
number of atoms. 
For the forward direction, expanding the rung-$n$ expectation gives
\begin{align}
&\left\langle
\frac{1}{N-n}
\sum_{i:a_i=\mathrm A}
e^{-\beta\Delta U_i^{\mathrm A\to\mathrm B}}
\right\rangle_n
=
\frac{1}{(N-n)Z_n}
\sum_{\mathbf a\in\Omega_n}
\sum_{i:a_i=\mathrm A}
\int d\lambda(\mathbf u,v)\,
e^{-\beta[U(\mathbf a^{(i\to\mathrm B)};\mathbf u,v)+PV(v)]}.
\label{eq:si-ladder-forward-proof}
\end{align}
Every configuration $\mathbf a'\in\Omega_{n+1}$ appears exactly $n+1$
times in this sum, because any one of its $n+1$ B sites can be identified
as the site introduced by the substitution. Hence,
\[
\left\langle
\frac{1}{N-n}
\sum_{i:a_i=\mathrm A}
e^{-\beta\Delta U_i^{\mathrm A\to\mathrm B}}
\right\rangle_n
=
\frac{n+1}{N-n}\frac{Z_{n+1}}{Z_n}
\implies
\frac{Z_{n+1}}{Z_n}
=
\frac{N-n}{n+1}
\left\langle
\frac{1}{N-n}
\sum_{i:a_i=\mathrm A}
e^{-\beta\Delta U_i^{\mathrm A\to\mathrm B}}
\right\rangle_n.
\]

Similarly, expanding the reverse expectation gives
\begin{align}
&\left\langle
\frac{1}{n+1}
\sum_{j:a_j=\mathrm B}
e^{-\beta\Delta U_j^{\mathrm B\to\mathrm A}}
\right\rangle_{n+1}
=
\frac{1}{(n+1)Z_{n+1}}
\sum_{\mathbf a'\in\Omega_{n+1}}
\sum_{j:a'_j=\mathrm B}
\int d\lambda(\mathbf u,v)\,
e^{-\beta[U(\mathbf a'^{(j\to\mathrm A)};\mathbf u,v)+PV(v)]}.
\label{eq:si-ladder-reverse-proof}
\end{align}
After regrouping by the resulting configuration in $\Omega_n$, every
rung-$n$ configuration appears exactly $N-n$ times, once for each of its
A sites. Therefore,
\[
\left\langle
\frac{1}{n+1}
\sum_{j:a_j=\mathrm B}
e^{-\beta\Delta U_j^{\mathrm B\to\mathrm A}}
\right\rangle_{n+1}
=
\frac{N-n}{n+1}\frac{Z_n}{Z_{n+1}} \implies
\frac{Z_{n+1}}{Z_n}
=
\frac{N-n}{n+1}
\left\langle
\frac{1}{n+1}
\sum_{j:a_j=\mathrm B}
e^{-\beta\Delta U_j^{\mathrm B\to\mathrm A}}
\right\rangle_{n+1}^{-1},
\]
which proves both identities in Eq.~\eqref{eq:ladder}.
\hfill$\square$

Equation \eqref{eq:ladder} provides two one-sided estimators of the same neighboring-rung free-energy difference. We record their discrepancy as an overlap diagnostic, but use Bennett’s acceptance ratio to combine the forward and reverse substitution works into the reported two-sided estimate, as described below.

\paragraph{Reduced works and Bennett acceptance ratio.}
To combine the two directions, augment each rung ensemble with a uniformly
chosen eligible substitution site. Define the forward-oriented reduced works
\begin{equation}
\begin{aligned}
w_{\mathrm f,i}
=
\beta\Delta U_i^{\mathrm{A}\to\mathrm{B}}
-\log\frac{N-n}{n+1},
&& \text{on } \mathbf a\sim\pi_n,
\qquad\quad
w_{\mathrm r,j}
=
-\beta\Delta U_j^{\mathrm{B}\to\mathrm{A}}
-\log\frac{N-n}{n+1},
&& \text{on } \mathbf a\sim\pi_{n+1}.
\end{aligned}
\label{eq:si-works}
\end{equation}
Both $w_{\mathrm f}$ and $w_{\mathrm r}$ represent the same forward reduced
potential difference, but are evaluated using samples from opposite ends of
the edge.

Let $n_{\mathrm f}$ and $n_{\mathrm r}$ denote the numbers of sampled configurations from rungs $n$ and $n+1$, respectively,
let
$C=\log(n_{\mathrm r}/n_{\mathrm f})$, and define
$\sigma(z)=(1+e^z)^{-1}$. The dimensionless edge free energy
\begin{equation}
\Delta_n
=
\beta\big[G(n+1)-G(n)\big]
=
-\log\frac{Z_{n+1}}{Z_n}
\end{equation}
is the solution of
\begin{equation}
\sum_{k=1}^{n_{\mathrm f}}
\frac{1}{N-n}
\sum_{i:\,a^{(k)}_i=\mathrm A}
\sigma\!\left(w_{\mathrm f,i}^{(k)}-\Delta_n-C\right)
+
\sum_{k=1}^{n_{\mathrm r}}
\frac{1}{n+1}
\sum_{j:\,a^{(k)}_j=\mathrm B}
\sigma\!\left(w_{\mathrm r,j}^{(k)}-\Delta_n-C\right)
=
n_{\mathrm r},
\label{eq:si-bar}
\end{equation}
This is the standard Bennett acceptance-ratio equation rewritten using
$\sigma(-z)=1-\sigma(z)$. Its left-hand side is monotone in $\Delta_n$, so
the unique finite root can be obtained robustly by bisection. BAR combines
the forward and reverse work distributions and is asymptotically optimal
under the standard independent-sampling assumptions. The discrepancy
between the two one-sided estimates in Eq.~\eqref{eq:ladder} is additionally
reported as an overlap diagnostic.

Accumulating the edge differences determines the free-energy curve relative
to the pure-A rung,
\begin{equation}
G(n)-G(0)
=
k_{\mathrm B}T\sum_{m=0}^{n-1}\Delta_m.
\end{equation}
The unknown value $G(0)$ cancels from the mixing free energy in
Eq.~\eqref{eq:fmix}.

\paragraph{The neural ladder.}
The trained sampler generates trajectories at fixed composition. A terminal
configuration on rung $n$ supplies the A$\to$B substitution works for the
edge $n\to n+1$ and the B$\to$A substitution works for the edge
$n-1\to n$. All eligible substitutions in each direction are evaluated in
one batched call to the potential, in addition to the terminal target-density
evaluation required for the forward--backward path weight.

Because generated terminals are not assumed to follow the exact rung
ensemble, each trajectory is corrected by its forward--backward weight from
Eq.~\eqref{eq:fbrnd}. For $K_n$ trajectories on rung $n$, with terminal
states $x_k^{(n)}$ and log forward-backward path weights $\log W_k^{(n)}$, the normalized
within-rung weights and expectations under the exact rung ensemble read
\begin{equation}
\bar W_k^{(n)}
=
\frac{\exp\!\big(\log W_k^{(n)}\big)}
{\sum_{\ell=1}^{K_n}\exp\!\big(\log W_\ell^{(n)}\big)}, \qquad\quad \widehat{\langle O\rangle}_n
=
\sum_{k=1}^{K_n}
\bar W_k^{(n)} O\big(x_k^{(n)}\big).
\label{eq:si-nlad-rw-expectation}
\end{equation}

Thus, the one-sided estimators in Eq.~\eqref{eq:ladder} are evaluated by
applying Eq.~\eqref{eq:si-nlad-rw-expectation} to their corresponding
site-averaged exponential works.

The same correction is applied to BAR. For the edge $n\to n+1$, let
$n_{\mathrm f}=K_n$, $n_{\mathrm r}=K_{n+1}$, and
$C=\log(n_{\mathrm r}/n_{\mathrm f})$. The importance-weighted BAR equation is
\begin{equation}
\begin{aligned}
&n_{\mathrm f}
\sum_{k=1}^{n_{\mathrm f}}
\bar W_k^{(n)}
\frac{1}{N-n}
\sum_{i:\,a_{k,i}^{(n)}=\mathrm A}
\sigma\!\left(
w_{\mathrm f,i}^{(k)}-\Delta_n-C
\right)
+
n_{\mathrm r}
\sum_{\ell=1}^{n_{\mathrm r}}
\bar W_\ell^{(n+1)}
\frac{1}{n+1}
\sum_{j:\,a_{\ell,j}^{(n+1)}=\mathrm B}
\sigma\!\left(
w_{\mathrm r,j}^{(\ell)}-\Delta_n-C
\right)
=
n_{\mathrm r}.
\end{aligned}
\label{eq:si-neural-bar}
\end{equation}
When all path weights are equal, $\bar W_k^{(n)}=1/n_{\mathrm f}$ and
$\bar W_\ell^{(n+1)}=1/n_{\mathrm r}$, and
Eq.~\eqref{eq:si-neural-bar} reduces to the unweighted BAR equation
\eqref{eq:si-bar}.

Any factor in $W_k^{(n)}$ that is constant across a rung cancels upon normalization. In particular, the semi-grand tilt $\exp(\beta\Delta\mu n)$ is common to all trajectories at fixed $n$ and therefore does not alter the target rung averages. 
The conditioning can nevertheless affect the efficiency of the proposal and hence the variance of the estimator, which we monitor through the within-rung effective sample size $\operatorname{ESS}_n=1/\sum_{k=1}^{K_n}\big(\bar W_k^{(n)}\big)^2$.

\subsection{General proposals and tilted inference}
\label{si:steering-general}
The steering scheme presented in the Methods section biases only the continuous channels of the proposal; the species channel is propagated by the pretrained kernel and steered through the weights and the resampling alone. Here we outline the general construction in which the species proposal is biased toward the reward as well. The proposal again mirrors the pretrained integrator: each continuous channel advances with an arbitrary drift $a^c$ with Gaussian transition density $\mathcal N\big(x^c_n+a^c(\mathbf a_n,\mathbf x_n,t_n)\,h,\;2g^2(t_n)h\big)$, while the species channel keeps the reveal times of the pretrained sampler and lets a site revealed at step $n$ draw its symbol from an arbitrary categorical $q^{\mathrm{prop}}_{n,i}$ with full support. The log-weight update then acquires, in addition to the reward increment and the Gaussian density ratio, one realized-token ratio per site revealed in that step,
\begin{equation}
\begin{aligned}
\Delta\log w_n&=r_{t_{n+1}}(\mathbf a_{n+1};\mathbf x_{n+1})-r_{t_n}(\mathbf a_n;\mathbf x_n)+\sum_{i\in\mathcal R_n}\log\frac{q_{\theta,i}(a_{n+1,i}\mid \mathbf a_n,\mathbf x_n)}{q^{\mathrm{prop}}_{n,i}(a_{n+1,i})}\\
&\quad+\sum_{c\in\{u,v\}}\log\frac{\mathcal N\big(x^c_{n+1};\;x^c_n+\big[b^c_\theta+g^2(t_n)\,s^c_\theta\big](\mathbf a_n,\mathbf x_n,t_n)\,h,\;2g^2(t_n)h\big)}{\mathcal N\big(x^c_{n+1};\;x^c_n+a^c(\mathbf a_n,\mathbf x_n,t_n)\,h,\;2g^2(t_n)h\big)},
\end{aligned}
\label{eq:smc-weight-general}
\end{equation}
where $\mathcal R_n$ is the set of sites revealed in step $n$. A natural species proposal is \emph{tilted decoding} \cite{he2026rne,hasan2026discrete}: a site revealed at step $n$ draws its symbol from the reweighted categorical
\begin{equation}
q^{\mathrm{prop}}_{n,i}(b)\;\propto\;q_{\theta,i}(b\mid \mathbf a_n,\mathbf x_n)\,e^{\lambda_a\,r_{t_n}(\mathbf a^{(i\to b)}_n;\,\mathbf x_n)},
\label{eq:tilted-decoding}
\end{equation}
where $\mathbf a^{(i\to b)}_n$ denotes the species state with site $i$ revealed as symbol $b$ and $\lambda_a\ge0$ controls the strength of the discrete guidance. With $\lambda_a=0$ the token ratios vanish and the construction reduces to the continuous-guidance scheme of the Methods section.

\subsection{Derivation of the inference-time steering weights}
\label{si:steering}

This section derives the steering weights---the general form \eqref{eq:smc-weight-general} with both channels proposed freely, of which the continuous-guidance weight \eqref{eq:smc-weight} of the Methods section is the special case $q^{\mathrm{prop}}_{n,i}=q_{\theta,i}$. The derivation is the discrete-time counterpart, on the hybrid state space, of the continuous-time Radon--Nikodym construction of Ref.~\cite{he2026rne}. Throughout we write $y=(\mathbf a,\mathbf x)$ for the full state and $y_{0:M}$ for a trajectory on the grid $t_n=nh$, $h=1/M$.

Conditionally on the current state $y_n$, the implemented sampler advances the continuous channels by independent Gaussian steps and each species site independently, so its one-step transition kernel is the exactly evaluable product
\begin{equation}
\overrightarrow{p}_n(y_{n+1}\mid y_n)
=\prod_{c\in\{u,v\}}\mathcal N\!\Big(x^c_{n+1};\;x^c_n+\big[b^c_\theta+g^2(t_n)\,s^c_\theta\big](y_n,t_n)\,h,\;2g^2(t_n)h\,I\Big)\;
\prod_{i=1}^{N}\overrightarrow{p}^{\,a}_{n,i}\big(a_{n+1,i}\mid a_{n,i};y_n\big),
\label{eq:si-kernel}
\end{equation}
with the per-site species factor determined by the reveal probability $p_n$ of Eq.~\eqref{eq:reveal} and the head of Eq.~\eqref{eq:masked-head},
\begin{equation}
\overrightarrow{p}^{\,a}_{n,i}(b'\mid b;y_n)=
\begin{cases}
p_n\,q_{\theta,i}(b'\mid y_n), & b=\texttt{M},\ b'\in\mathcal A,\\[2pt]
1-p_n, & b=b'=\texttt{M},\\[2pt]
\mathbf 1\{b'=b\}, & b\in\mathcal A\ \text{(revealed sites are frozen)}.
\end{cases}
\label{eq:si-species-kernel}
\end{equation}
The proposal kernel $\overrightarrow{q}_n$ has the same form, with an arbitrary drift $a^c$ in place of $b^c_\theta+g^2 s^c_\theta$ and an arbitrary full-support categorical $q^{\mathrm{prop}}_{n,i}$ in place of $q_{\theta,i}$, the reveal probabilities being shared. Let $\mathbb Q(y_{0:M})=\pi_0(y_0)\prod_{n}\overrightarrow{q}_n(y_{n+1}\mid y_n)$ be the law of the proposal chain, where $\pi_0$ is the joint prior.

For any unnormalized terminal density $\tilde p(y)$ and any normalized backward kernels $\overleftarrow{c}_n$, the weight
\begin{equation}
w(y_{0:M})=\frac{\tilde p(y_M)}{\pi_0(y_0)}\,\prod_{n=0}^{M-1}\frac{\overleftarrow{c}_n(y_n\mid y_{n+1})}{\overrightarrow{q}_n(y_{n+1}\mid y_n)}
\label{eq:si-generic-w}
\end{equation}
satisfies $\mathbb E_{\mathbb Q}\big[w\,O(y_M)\big]=\int\tilde p\,O\,\mathrm dy$ for any observable $O$, as the proposal density cancels and $y_0,\ldots,y_{M-1}$ integrate out in turn since each $\overleftarrow{c}_n$ is normalized. Nothing in this argument refers to the nature of the state space: the integral $\int\mathrm dy$ is the sum over species configurations combined with the Lebesgue integral over the continuous channels, and the successive marginalizations interleave sums and Gaussian integrals, so the identity holds verbatim on the hybrid state space. The choice of $\overleftarrow{c}_n$ therefore affects only the variance. The free-energy path weight \eqref{eq:fbrnd} is the case $\tilde p=\pi_1$, $\overrightarrow{q}_n=\overrightarrow{p}_n$, and $\overleftarrow{c}_n$ the time reversal of the learned chain---the backward Gaussian kernels of \eqref{eq:fbrnd-uw}, built from the reversed SDE \eqref{eq:bwd-sde}, for the continuous channels and the parameter-free re-masking kernel for the species.

For steering, let $p_{\theta,n}$ denote the (intractable) marginals of the pretrained chain, with $p_{\theta,0}=\pi_0$ and $p_{\theta,M}=p_{\text{base}}$. Choosing the tilted model ensemble $\tilde p=p_{\text{base}}\,e^{\eta r}$ and the exact time reversal of the pretrained chain,
\begin{equation}
\overleftarrow{c}_n(y_n\mid y_{n+1})=\frac{\overrightarrow{p}_n(y_{n+1}\mid y_n)\,p_{\theta,n}(y_n)}{p_{\theta,n+1}(y_{n+1})},
\label{eq:si-reversal}
\end{equation}
which is normalized by construction, the intractable marginals telescope against $p_{\text{base}}$ and the prior, and \eqref{eq:si-generic-w} reduces to
\begin{equation}
\log w(y_{0:M})=\eta\,r(y_M)+\sum_{n=0}^{M-1}\log\frac{\overrightarrow{p}_n(y_{n+1}\mid y_n)}{\overrightarrow{q}_n(y_{n+1}\mid y_n)},
\label{eq:si-steering-w}
\end{equation}
free of intractable quantities. Writing $\eta\,r(y_M)=\sum_{n}\big[r_{t_{n+1}}(y_{n+1})-r_{t_n}(y_n)\big]$, which uses only the boundary conditions $r_0\equiv0$ and $r_1=\eta r$, yields the incremental update \eqref{eq:smc-weight-general}. The intermediate reward hence cancels from the terminal weight; it matters only through resampling, as the population after step $n$ approximates $\propto p_{\theta,n}\,e^{r_{t_n}}$, so $r_t$ controls how gradually the tilt is introduced along the rollout. The weights are exact for any step size $h$, and any guidance strengths.

\paragraph{The kernel ratio splits by channel.} Because base and proposal share the reveal probabilities, the pattern factors $p_n$, $1-p_n$ and the frozen-site factors in \eqref{eq:si-species-kernel} cancel between $\overrightarrow{p}_n$ and $\overrightarrow{q}_n$, and because both Gaussian factors share the variance $2g^2(t_n)h$, their normalizers cancel as well. Writing $u^c_n=[b^c_\theta+g^2(t_n)\,s^c_\theta](y_n,t_n)$ for the pretrained drift and $a^c_n$ for the proposal drift, the kernel ratio in \eqref{eq:si-steering-w} reduces to
\begin{equation}
\log\frac{\overrightarrow{p}_n}{\overrightarrow{q}_n}
=\sum_{c\in\{u,v\}}\frac{\big\|x^c_{n+1}-x^c_n-a^c_n h\big\|^2-\big\|x^c_{n+1}-x^c_n-u^c_n h\big\|^2}{4g^2(t_n)h}
\;+\sum_{i\in\mathcal R_n}\Big[\log q_{\theta,i}\big(a_{n+1,i}\mid y_n\big)-\log q^{\mathrm{prop}}_{n,i}\big(a_{n+1,i}\big)\Big],
\label{eq:si-kernel-ratio}
\end{equation}
where $\mathcal R_n$ is the set of sites revealed in step $n$ and both categoricals are evaluated at the realized symbols; every term is computed along the trajectory the sampler generates anyway, at no additional network cost. Substituting \eqref{eq:si-kernel-ratio} into \eqref{eq:si-steering-w} yields exactly the general steering weight \eqref{eq:smc-weight-general} of Section~\ref{si:steering-general}; propagating the species channel by the pretrained head, $q^{\mathrm{prop}}_{n,i}=q_{\theta,i}$, cancels the token ratios and recovers the continuous-guidance weight \eqref{eq:smc-weight} of the Methods section. With $\lambda_c=\lambda_a=0$ the two kernels coincide, the ratio vanishes identically, and the weight update consists of reward increments alone---pure reweighting of the unmodified sampler.

\subsection{Choices of intermediate reward and guidance}
\label{si:steering-choices}
The choices of the intermediate reward and proposal dynamics affect only the weight variance. In this work, we use two different variants outlined below.

\paragraph{Annealed reward.} When $r$ is meaningful on partially generated states, the annealed reward\cite{cheng2026atlas} $r_t(x)=\alpha_r(t)\,\eta\,r(x)$ scores the current state directly under a ramp $\alpha_r(0)=0$, $\alpha_r(1)=1$. On the hybrid state this requires a convention for the species argument: revealed sites enter with their committed symbols, and still-masked sites are soft-completed by the head marginals $q_{\theta,i}$.

\paragraph{Tweedie reward.} When $r$ is defined only on clean terminal structures, we score the per-channel posterior means of the terminal state. For a continuous channel, the interpolant \eqref{eq:cont-interpolant} is deterministic given its endpoints, so conditioning it and the velocity \eqref{eq:cont-velocity} on $x_t=x$ gives two linear equations for the posterior means $m^c_k=\mathbb E[x^c_k\mid x_t=x]$, namely $x^c=(1-\alpha)m^c_0+\alpha m^c_1$ and $b^c_\theta=\dot\alpha\,(m^c_1-m^c_0)$ \cite{cheng2026atlas}, and hence
\begin{equation}
\hat x^c_1(x,t)=x^c+\frac{1-\alpha(t)}{\dot\alpha(t)}\,b^c_\theta(x,t),
\label{eq:si-tweedie}
\end{equation}
the posterior-mean estimate available from the velocity head alone \cite{robbins1992empirical}. For the species channel no estimate is needed: the head \emph{is} the posterior over the terminal identity, and the soft species argument
\begin{equation}
\hat\rho_i(b\mid x,t)=
\begin{cases}
\mathbf 1\{b=a_i\}, & i\ \text{revealed},\\
q_{\theta,i}(b\mid x), & i\ \text{masked},
\end{cases}
\label{eq:si-tweedie-species}
\end{equation}
is exact on revealed sites, where the terminal symbol is already committed.

\section{Experimental setup}
\label{si:setup}

\subsection{Systems and reference data}
\label{si:setup-ising}

\paragraph{The Ising testbed.} The lattice semi-grand-canonical alloy is the controlled testbed for the species channel: the particles sit permanently on an $L\times L$ periodic lattice, so the only degree of freedom is the species label $a_i\in\{0,1\}$ and the mode-coverage problem is isolated from the positional one. The Hamiltonian
\begin{equation}
H(\mathbf a)=\sum_{\langle ij\rangle}\epsilon(a_i,a_j)-\Delta\mu\,N_{\mathrm B}(\mathbf a),\qquad
\pi_{1}(\mathbf a)\propto e^{-H(\mathbf a)/k_{\mathrm B}T},
\label{eq:si-ising-H}
\end{equation}
maps exactly onto the ferromagnetic Ising model through $\sigma_i=2a_i-1$, with coupling $J=(2\epsilon_{AB}-\epsilon_{AA}-\epsilon_{BB})/4$ and uniform field $h=\Delta\mu/2$. We use $\epsilon_{AA}=\epsilon_{BB}=0$ and $\epsilon_{AB}=2$, so $J=1$ and $(k_{\mathrm B}T/J)_c=2/\ln(1+\sqrt2)=2.2692$. The $(T,\Delta\mu)$ plane is therefore the textbook field--temperature phase diagram: a first-order coexistence line along $\Delta\mu=0$ for $T<T_c$, terminating at the critical point $(T_c,0)$, and analytic crossover elsewhere. A single amortized model is trained per lattice size and read out over the whole plane; the results of Fig.~\ref{fig2}a use $L=16$. Ground truth is generated with a \emph{ghost-spin} Wolff cluster sampler \cite{wolff1989collective} on a grid of 11 temperatures $k_{\mathrm B}T/J\in[1.5,3.2]$ (containing $T_c$) and 11 fields $\Delta\mu\in[0,0.4]$, the $\Delta\mu<0$ half-plane following by particle--hole symmetry; each state runs 24 independent chains of 3\,000 cluster steps (600 discarded as burn-in), and the sampler is validated against exact enumeration on $4\times4$ lattices.

\paragraph{Alloy systems.}
\label{si:setup-alloys}
The alloy systems are those considered in the main part: fcc Cu--Ni, Cu--Ag and Ni--Cr on $3\times3\times3$ conventional supercells ($N=108$ sites) and bcc Ni--Cr on a $4\times4\times4$ supercell ($N=128$), at $P=0$, with EAM potentials from the NIST repository \cite{hale2018evaluating}---Cu--Ni \cite{fischer2019systematic}, Cu--Ag \cite{williams2006embedded}, and a Finnis--Sinclair-type Ni--Co--Cr potential for both Ni--Cr lattices---and interaction cutoffs of 5.0\,\AA{} (Cu--Ni, Ni--Cr fcc) and 5.3\,\AA{} (Cu--Ag, Ni--Cr bcc). The amortization windows are 600--1200\,K (Cu--Ni), 500--1000\,K (Cu--Ag) and 600--1500\,K (both Ni--Cr lattices) in temperature; the chemical-potential windows are specified below (Table~\ref{tab:si-windows}).

\paragraph{Alloy reference data.} All semi-grand references follow the replica-exchange protocol of Section~\ref{si:ref-data}, with two walkers per state initialized from the two competing pure phases; their agreement after burn-in is the convergence certificate. Cu--Ni: 15 temperatures (500--1200\,K) $\times$ 33 chemical potentials ($\Delta\mu=0.600$--$1.150$\,eV, 10\,meV spacing across the composition crossover and 25\,meV in the pinned wings), 40\,000 sweeps (2\,000 burn-in) with 6 species-flip attempts per sweep; consistent with the diagnostic of Section~\ref{si:ladder-bar}, the production Cu--Ni run uses local moves only. The $N=256$ size-transfer reference of Fig.~\ref{fig2}e uses 7 temperatures $\times$ 11 chemical potentials with temperature exchange every other sweep and 24\,000 sweeps (12\,000 burn-in). Cu--Ag: 11 temperatures (500--1000\,K) $\times$ 28 chemical potentials (0.50--0.84\,eV), deterministic even--odd temperature exchange within each $\Delta\mu$ column every other sweep, 40\,000 sweeps (5\,000 burn-in), 27 flip attempts per sweep. Ni--Cr: per lattice, 4 temperatures (600--1500\,K) $\times$ 25 chemical potentials spanning the estimated transition line $\pm0.5$\,eV, exchange as for Cu--Ag, 40\,000 sweeps (8\,000 burn-in), 16 flip attempts per sweep. The composition-ladder references of Section~\ref{si:ladder-bar} sample every rung $n=0,\dots,N$---at 11 isotherms (500--1000\,K) for Cu--Ag and 7 isotherms (600--1500\,K) per Ni--Cr lattice---with two walkers per rung initialized from a random solid solution and a phase-separated slab (their spread is the per-edge convergence gate), 30\,000 sweeps (5\,000 burn-in) of displacement, volume and pair-exchange moves, and temperature replica exchange at fixed composition.

\subsection{Architecture and training details}
\label{si:training-details}

\paragraph{Alloy network.} All heads share one PaiNN-style equivariant message-passing trunk \cite{schutt2021equivariant} with 4 interaction layers and 64 scalar and 64 vector channels per site ($2.7\times10^5$ parameters); interatomic distances are expanded in 16 Gaussian radial basis functions under a smooth polynomial cutoff envelope, with the cutoff matched to the potential range as above. The per-site input features encode the species state, smooth species-resolved coordination numbers together with a separate count of still-masked neighbours, the displacement magnitude, the normalized log-volume, sinusoidal time features, and the conditioning features of the amortized family---the normalized inverse temperature, the normalized offset $\Delta\mu-\widehat{\Delta\mu}(T)$. The displacement velocity and score are emitted by per-site gated equivariant vector heads, augmented by a learned scalar multiple of the displacement itself---a lattice anchor that breaks exactly the symmetry the reference lattice breaks; the volume fields are read out from mean-pooled invariant features; and the species head emits per-site categorical logits from layer-normalized, gradient-isolated trunk features, so that the $O(1)$ cross-entropy \eqref{eq:loss-disc} is not swamped by the large-scale score regressions. All scalar heads are zero-initialized, so generation starts from the prior transport and the fair-coin completion law.

\paragraph{Lattice network.} For the Ising testbed the species head is a periodic fully-convolutional network: 4--6 layers of $3\times3$ convolutions with circular padding and 64--96 channels ($0.2$--$0.8$\,M parameters), with time, the conditions and the spatial mean of every feature channel re-injected at each layer as constant input channels---an intensive global feature that gives the local receptive field access to the lattice-wide order parameter---and a zero-initialized $1\times1$ output layer.

\paragraph{Optimization and budget.} All models train with Adam at learning rate $3\times10^{-3}$; for the alloys the species loss weight in Eq.~\eqref{eq:cond-loss} is $\lambda=2$ and the two continuous channels are weighted equally (Cu--Ag triples the volume-channel weight). Rollouts use $M=100$ integration steps for the alloys and 128 reveal blocks for the lattice models during training. Alloy training runs 100--120 outer rounds, each generating 500 fresh terminals (1\,000 for Cu--Ni) into a replay buffer of 2\,000--5\,000 configurations and taking 500 gradient steps at batch size 64--96. Because rollouts are target-free and the labels are buffered (Section~\ref{si:algorithms}), the total number of potential evaluations per training run is the initial buffer fill plus rounds $\times$ fresh terminals: $5.5\times10^4$ (Cu--Ag), $6.2\times10^4$ (per Ni--Cr lattice) and $1.25\times10^5$ (Cu--Ni), each evaluation a single batched energy--force--virial--substitution pass. For the lattice testbed the Hamiltonian is inexpensive and the heat-bath labels are instead recomputed on the fly; training runs several hundred outer rounds with 512 chains per generation.

\begin{table}[tbp]
\centering
\caption{Amortization windows and transition-line estimates $\widehat{\Delta\mu}(T)=a_0+a_1T$ of the alloy models.}
\label{tab:si-windows}
\begin{tabular}{lcccc}
\hline
system & $T$ window (K) & $a_0$ (eV) & $a_1$ (eV/K) & narrow half-width (eV) \\
\hline
Cu--Ni & 600--1200 & $+0.893$ & $-5.4\times10^{-5}$ & 0.06 \\
Cu--Ag & 500--1000 & $+0.690$ & $-7.1\times10^{-5}$ & 0.15 \\
Ni--Cr fcc & 600--1500 & $-0.909$ & $-5.4\times10^{-5}$ & 0.06 \\
Ni--Cr bcc & 600--1500 & $-1.086$ & $+0.9\times10^{-5}$ & 0.06 \\
\hline
\end{tabular}
\end{table}
\paragraph{Choice of the conditioning distribution.}
\label{si:conditioning}
Temperature is sampled uniformly in the inverse temperature over the training window; since the target log-density \eqref{eq:target} is linear in $(k_{\mathrm B}T)^{-1}$, this covers the range of score magnitudes evenly. The chemical potential is sampled relative to an inexpensive linear estimate of the transition line, $\Delta\mu=\widehat{\Delta\mu}(T)+\delta$ with $\widehat{\Delta\mu}(T)=a_0+a_1T$, whose anchor $a_0$ is the $T=0$ energy difference of the relaxed pure phases and whose slope $a_1$ comes from a short pilot run; away from this line the composition is pinned at the pure phases and the target barely changes with $\Delta\mu$. The offset $\delta$ is drawn from an equal mixture of a wide uniform component ($\pm0.30$\,eV), which covers the composition-pinned wings and the estimation error, and a narrow one ($\pm0.06$\,eV; $\pm0.15$\,eV for Cu--Ag), which concentrates the budget where the composition responds most strongly. The estimate needs only to be approximately correct: it determines what the conditioned prior absorbs, and the network corrects the remainder. The per-system windows and line estimates are listed in Table~\ref{tab:si-windows}.

\paragraph{Physics-informed priors.}
\label{si:prior-fit}
Both continuous priors are conditioned on $(T,\Delta\mu)$ through the composition estimate $c_0=\mathrm{sigmoid}\big((\Delta\mu-\widehat{\Delta\mu}(T))/k_{\mathrm B}T\big)$---the exact isotherm of the non-interacting lattice gas at the inexpensive linear estimate $\widehat{\Delta\mu}(T)$ of the phase-transition line:
\begin{align}
v_0&\sim\mathcal N(\bar v,\sigma_v^2), & e^{\bar v}&=N\,\hat V_{\mathrm{atom}}(c_0,T),\label{eq:prior-w}\\
\mathbf u_0&\sim\mathcal N\big(0,\sigma_u(T)^2 I\big), & \sigma_u(T)&=\sigma_u^{\mathrm{ref}}\sqrt{T/T_{\mathrm{ref}}}.\label{eq:prior-u}
\end{align}
The volume prior \eqref{eq:prior-w} centres the cell volume on the per-atom volume
\begin{equation}
\hat V_{\mathrm{atom}}(c_0,T)=\big[(1-c_0)\,\hat V_{\mathrm A}+c_0\,\hat V_{\mathrm B}+\Omega_V\,c_0(1-c_0)\big]\,\big(1+\alpha_V\,(T-T_{\mathrm{ref}})\big),
\label{eq:si-vhat}
\end{equation}
an interpolation between the per-atom volumes $\hat V_{\mathrm A},\hat V_{\mathrm B}$ of the relaxed pure phases, a regular-solution excess term $\Omega_V\,c_0(1-c_0)$, and a linear thermal-expansion factor. All four parameters are derived from the interatomic potential alone, so no reference data enters training. Two stages generate the fitting data. \emph{Statics}: on a composition grid $c\in\{0,\tfrac14,\tfrac12,\tfrac34,1\}$, displacements and log-volume are relaxed by gradient descent on the energy at $P=0$ averaged over random species arrangements; the local relaxation around size-mismatched neighbours is what produces the excess volume, so unrelaxed lattices would spuriously fit $\Omega_V\approx0$. \emph{Finite temperature}: short fixed-composition isobaric Metropolis runs on the same energy (displacement and volume moves only, so the composition never changes) over a $(c,T)$ grid supply the thermal volume surface. The fit exploits that \eqref{eq:si-vhat} is nonlinear only through the scalar $\alpha_V$: at fixed $\alpha_V$ the thermal factor $s(T)=1+\alpha_V(T-T_{\mathrm{ref}})$ is a known number for every data point, so each measured volume contributes one row of the linear model
\begin{equation}
v(c,T)\;\approx\;s(T)\,\big[(1-c),\;c,\;c(1-c)\big]\cdot\big(\hat V_{\mathrm A},\,\hat V_{\mathrm B},\,\Omega_V\big)^{\!\top},
\label{eq:si-vfit}
\end{equation}
and the three linear parameters follow from an exact normal-equation least-squares solve. A one-dimensional scan over $\alpha_V$, keeping the solution with the smallest residual, then determines all four parameters.

The displacement-prior width $\sigma_u^\text{ref}$ is obtained on the same energy-only budget: at each point of a small $(c,T)$ grid the displacements are relaxed at the conditioned volume $\hat V_{\mathrm{atom}}(c,T)$, one Hessian $K=\partial^2U/\partial\mathbf u^2$ of the energy is evaluated in fractional coordinates, and the isotropic width of the quasi-harmonic law $\mathbf u_0\sim\mathcal N\big(0,(\beta K)^{-1}\big)$, namely $\sigma_{\mathrm{eff}}(T)=\sqrt{\operatorname{tr}\big[(\beta K)^{-1}\big]/3N}$, is regressed as $\log\sigma_{\mathrm{eff}}(T)=\log\sigma_u^{\mathrm{ref}}+p\,\log(T/T_{\mathrm{ref}})$, where $\sigma_u^\text{ref}$ and $p$ are the fitting parameters. The entire prior calibration costs only few energy evaluations per system.


\bibliography{refs.bib}
\end{document}